\newtheorem{definition}{Definition}
\newtheorem{proposition}{Proposition}
\newtheorem{theorem}{Theorem}
\Crefname{claim}{claim}{claims}
\theoremstyle{definition}
\newtheorem{example}{Example}
\DeclareMathOperator*{\argmin}{arg\,min}
\newsavebox{\ffbox}
\newcommand{\cmark}{\textcolor{green!30!gray}{\ding{51}}} %
\newcommand{\naturals}{{{\mathbb{N}}}}
\newcommand{\val}{\zeta}
\theoremstyle{definition}
\newcommand{\alg}[1]{\textproc{#1}}
\newcommand{\algcomp}[1]{$\overline{\mbox{\alg{#1}}}$}
\newcommand{\GM}{\algcomp{MES}}
\newcommand{\algmu}[1]{\textproc{#1$_\mu$}}
\newcommand{\algmuprime}[1]{\textproc{#1$_{\mu'}$}}
\newcommand{\algcompmu}[1]{$\overline{\mbox{\algmu{#1}}}$}
\newcommand{\GMmu}{\algcompmu{MES}}
\newcommand{\Ratio}{W}
\newcommand{\Ratiohat}{\hat{W}}
\newif\ifcomments %
\title{Utilitarian Guarantees for the Method of Equal Shares}
\author {
    Anton Baychkov\textsuperscript{\rm 1},
    Markus Brill\textsuperscript{\rm 1},
    Jannik Peters\textsuperscript{\rm 2}
}
\begin{document}

\maketitle

\begin{abstract}

In recent years, research in Participatory Budgeting (PB) has put a greater emphasis on rules satisfying notions of fairness and proportionality, with the \emph{Method of Equal Shares} (MES) being a prominent example. However, proportionality can come at a cost to the total utilitarian welfare. Our work formalizes this relationship, by deriving minimum utilitarian welfare guarantees for MES for a subclass of satisfaction functions called DNS functions, which includes two of the most popular ways of measuring a voter’s utility in the PB setting: considering (1) the total cost of approved projects or (2) the total number of those projects. Our results are parameterized in terms of minimum and maximum project costs, which allows us to improve on the mostly negative results found in prior studies, and reduce to the existing multiwinner guarantee when project costs are equal. We show that our guarantees are asymptotically tight for rules satisfying \emph{Extended Justified Representation up to one project}, showing that no proportional rule can achieve a better utilitarian guarantee than MES. %

\end{abstract}

\section{Introduction}

Participatory budgeting (PB) is a key topic in computational social choice \citep{ReMa23a}. PB enables voters to directly vote on how public money is spent. This is often accomplished by allowing the voters to express their preferences over a set of available projects. These preferences are then aggregated using a voting rule selecting a subset of the projects that fits within a pre-defined budget limit \citep{WMT21a}. 

In practice, the voting process is usually conducted via \emph{approval ballots}, i.e., each voter can indicate their support for the subset of projects they like, without the need to specify a ranking among them. These ballots are usually aggregated using a simple \alg{Greedy} method: projects are processed in order of decreasing welfare-to-cost ratio (i.e., ``bang for buck''), where a project's welfare is determined by the total satisfaction it provides to voters. In the most common formulation, where satisfaction is measured by project cost, this reduces to ordering projects by decreasing vote count. When processing projects in this order, each project that fits into the remaining budget is added to the outcome. This method is simple and easy to explain, and greedily maximizes the \textit{utilitarian welfare} %
of the outcome. %

However, \alg{Greedy} has a major downside: it is not \emph{proportional}. To see this, consider a scenario where 60\% of voters approve only infrastructure projects while 40\% approve only leisure projects. \alg{Greedy} would spend all the budget on infrastructure projects, leaving the minority completely unrepresented despite their legitimate claim to a proportional share of the budget.
Meanwhile, the recently introduced \textit{Method of Equal Shares} (\alg{MES}) \citep{PPS21a} yields a fairer outcome by splitting the total budget equally among voters, and allowing groups of voters to ``buy'' projects using their allocated funds.
While this leads to proportional outcomes\,---\,as formalized by variants of the \textit{extended justified representation} (EJR) axiom\,---\,\alg{MES} can produce outcomes with low utilitarian welfare. %

This  tradeoff between utilitarian and proportionality objectives leads to the question of how much utilitarian welfare can still be guaranteed by proportional rules in general, and by MES in particular. To answer this question, we study \emph{utilitarian guarantees}, i.e., lower bounds on the ratio between the utilitarian welfare provided by a rule on a given instance and the maximal achievable utilitarian welfare for that instance.\footnote{The inverse of this ratio (for ``fair'' voting rules) is sometimes referred as the \emph{price of fairness} (e.g., \citealp{EFI+24a}); a lower ratio corresponds to a higher price.}

For PB, the definition of utilitarian guarantees immediately runs into two fundamental issues. 
First, it is unclear how voter satisfaction (or welfare) should be defined in this setting, with two differing definitions having emerged in the literature: 
(i) \emph{cost satisfaction} assumes that a voter's satisfaction is measured in terms of the amount of money spent on projects they approve; whereas 
(ii) \emph{cardinality satisfaction} assumes that the satisfaction of a voter is given by the number of approved projects in the outcome. 
Second, the meaningfulness of utilitarian guarantees depends critically on how they are parameterized. The PB setting offers a variety of potential parameters: budget size, project cost distributions, voter preferences, voter turnout, etc. However, the value of guarantees that depend on uncontrollable factors (such as voter turnout or preference diversity) is questionable, as these factors may grow large or vary unpredictably. Instead, meaningful guarantees should be expressed in terms of parameters that PB organizers have control over, such as the total budget and/or the range of possible project costs.

To illustrate these issues, it is instructive to  consider the simpler setting of approval-based multiwinner voting \citep{ABC+16a}, where all projects (referred to as candidates) have the same cost and $k$ of them need to be selected. In this setting, the first issue disappears entirely, as cost and cardinality satisfaction become equivalent. Moreover, the \emph{committee size} $k$ is a natural parameter. The best possible utilitarian guarantee of any proportional voting rule is approximately $\frac{2}{\sqrt{k}}$ \citep{LaSk20b, EFI+24a}, with this bound also being achieved by a variant of MES \citep{BrPe24a}. 

For PB, however, the picture is less clear. %
Prior work on utilitarian guarantees for proportional PB rules is extremely limited, with \citet{FVMG22a} providing the only study to date. However, their analysis yields only weak results: they derived bounds for various voting rules including rules satisfying EJR, but their guarantees are parameterized by the number of voters, making them vanishingly small for realistic PB instances. Moreover, their analysis was restricted to the cardinality satisfaction function (while most real-world instances use cost satisfaction) and produced lower and upper bounds on their utilitarian guarantees that are orders of magnitude apart. This led \citet{ReMa23a} to comment that ``the existing analysis of the price of fairness in PB \citep{FVMG22a} is still at a preliminary stage and there is a lot of room for improvement'' (page 85).

\subsection{Our Contribution}
We derive bounds on the utilitarian guarantee for \alg{MES}, and proportional rules in general, for a whole class of satisfaction functions.
To quantify our utilitarian guarantees, %
we use three parameters: 
the {budget} $b$, 
the {cost of the cheapest project} $c_{\min}$, and 
the {cost of the most expensive project} $c_{\max}$. 
In particular, we express our guarantees in terms of $\frac{b}{c_{\min}}$ and $\frac{b}{c_{\max}}$, which bound the number of projects that can be chosen in an exhaustive outcome, and are thus closely related to the committee size $k$ in multiwinner voting. To motivate our parametrization, we show that if project costs are allowed to vary arbitrarily, then we cannot get a meaningful utilitarian guarantee for \alg{MES} or even \alg{Greedy}, for any satisfaction function (\Cref{sec:negative_results}).
From a practical perspective, the budget and bounds on permissible project costs can be controlled by the PB organizer, while this is less true for other variables like the number of voters.

We derive positive results for a broad class of satisfaction functions known as \textit{DNS functions} (see \Cref{def:dns}), introduced by \citet{BFL+23a}. This class contains cost satisfaction and cardinality satisfaction as special cases. 
We also show that functions outside this class can lead to arbitrarily bad utilitarian ratios for \alg{MES}, no matter how much project costs are constrained (\Cref{prop:worstcase_non_dns}).

Our main result is a utilitarian guarantee for \alg{MES} (for any DNS satisfaction function). This guarantee is derived by comparing the welfare of the outputs of \alg{MES} and \alg{Greedy} (\Cref{sec:mes_greedy}), and combining that with a knapsack-inspired utilitarian guarantee for \alg{Greedy} to find a guarantee for \alg{MES} (\Cref{sec:mes_guarantee}). We show that this guarantee is asymptotically tight, not just for \alg{MES}, but for any proportional PB voting rule. Our bounds reduce to the multiwinner guarantees of \citet{BrPe24a} when all project costs are equal.

We also derive a utilitarian guarantee for the \alg{Greedy} rule for the curious case that the satisfaction function used to implement the rule differs from the satisfaction function used to measure voter welfare (\Cref{sec:greedy_greedy}).

\subsection{Related Work}

\paragraph{Utilitarian Guarantees in Multiwinner Voting.}
Besides the aforementioned work of \citet{FVMG22a}, our work is most closely related to the paper of \citet{BrPe24a}, who consider utilitarian guarantees in the multiwinner voting setting. They derive utilitarian guarantees for two proportional rules, showing that \alg{MES} achieves a guarantee of $\frac{2}{\sqrt{k}}-\frac{2}{k}$, and the Greedy Justified Candidate Rule (\alg{GJCR}) achieves a guarantee of $\frac{2}{\sqrt{k}}-\frac{1}{k}$. The latter matches the upper bound  on the utilitarian guarantee that is achievable by any proportional rule, as proven by \citet{LaSk20b} in the paper that initiated the study of utilitarian (and representation) guarantees in multiwinner voting. Moreover, \citet{EFI+24a} studied the impact of requiring the axiom of justified representation (a weakening of EJR) on these guarantees. %
Utilitarian guarantees were also studied by \citet{DBW+25a} and \citet{RML25a}, who analyzed how well one can approximate them together with other objectives.

\paragraph{Utilitarian Guarantees in Participatory Budgeting.}
Extending these multiwinner results to PB, however, proves challenging. \citet{FVMG22a} established bounds for various voting rules including those satisfying EJR. However, their analysis suffers from several limitations. First, their lower bound of $\Omega(\frac{1}{n \cdot c_{\max}})$ (with $c_{\min}$ normalized to~$1$) applies to all rules that exhaust the budget, regardless of proportionality. Second, their construction to prove the upper bound of $O(\frac{1}{\sqrt{n}})$ directly adapts an existing construction for multiwinner voting~\citep[Theorem~9]{LaSk20b} without exploiting the richer structure available in the PB setting. Notably, both bounds are parameterized by the number of voters $n$, making them practically meaningless for realistic PB instances: even with $n = 1{,}000$ voters,\footnote{At time of writing, 68\% of the instances in the PB data library \textit{Pabulib} %
\citep{FFP+23a} have $n\geq 1000$.} their lower bound guarantees less than $0.1\%$ of the optimal welfare. 

\paragraph{Participatory Budgeting.} %
PB has become a very active research area in computational social choice. We refer to \citet{ReMa23a} for an up-to-date survey on the topic. In particular, we are closely related to recent papers on proportionality in PB \citep[see, e.g.,][]{ALT18a, PPS21a, LCG22a, AzLe21a, Mal25a, ALS+24a}. Specifically, through our usage of the satisfaction function framework (and DNS functions in particular) to parameterize voting rules and welfare measures, we build on the work of \citet{BFL+23a}. Finally, we want to highlight the recent work of \citet{GPS+24a} who were motivated by the sometimes poor utilitarian performance of MES to design alternative voting rules, allowing for a better balance between proportionality and other desiderata.

\section{Preliminaries}

Let $P$ be a set of \emph{projects} and $N=\{1,\dots,n\}$ a set of \emph{voters}. For each voter $i \in N$, we let $A_i \subseteq P$ denote the set of projects approved by this voter. Together, $A =(A_i)_{i \in N}$ forms an \emph{approval profile}. For a project $p \in P$, we let $N_p = \{i\in N\colon p\in A_i\}$ denote the \emph{supporters of $p$}, i.e., the set of voters approving $p$. Finally, for each project $p \in P$ we are given a \emph{cost} $c(p) \in \mathbb{R}_{>0}$. Together with a budget limit $b \in \mathbb{R}_{>0}$, the tuple $I = (P,A,b,c)$ forms an approval-based participatory budgeting (PB) \textit{instance}. 
{We will use the PB instance depicted in \Cref{tab:running} as a running example in this section.}
Throughout the paper we will assume that $c(p) \le b$ for any project $p \in P$.
We let $\mathcal{I}$ be the set of all possible PB instances. A \emph{feasible outcome} for a given instance $I$ is any subset $P' \subseteq P$ with $c(P') = \sum_{p \in P'} c(p) \le b$. We call a feasible outcome $P'$ \emph{exhaustive} if there is no other feasible outcome $P'' \supset P'$.

\emph{Multiwinner} instances are a special class of PB instances in which projects all have the same cost, $c(p)=c$ for all $p\in P$, and the total budget is $b=kc$. Here, $k$ is also referred to as the \emph{committee size} (i.e., the number of projects to be~chosen).

\paragraph{Satisfaction Functions.}
Throughout the paper, we model voters' utilities using \emph{additive approval-based satisfaction functions} \citep{TaFa19a, BFL+23a}.
In order to argue about satisfaction and voting rules across different instances, we define satisfaction functions in an instance-agnostic way. %
In particular, we assume that there is a universe $\mathcal{P}$ of all possible projects, and we let $\mathcal{F}(\mathcal{P})$ be the set of all finite subsets of $\mathcal{P}$. A \emph{(global) satisfaction function} $\mu \colon \mathcal{F}(\mathcal{P}) \to \mathbb{R}_{\geq 0}$ maps subsets of projects to the satisfaction received from this subset. 
In line with \citet{BFL+23a}, we assume that satisfaction functions are \emph{additive}, i.e., it holds that $\mu(P') = \sum_{p \in P'} \mu(p)$ (with $\mu(p)$ being shorthand for $\mu(\{p\})$). We also assume that the satisfaction of any project $p$ is strictly positive, so that $\mu(P') =0$ if and only if $P'=\emptyset$.
For a given voter $i$, we let $\mu_i(P') = \mu(P' \cap A_i)$, i.e., the voter only receives satisfaction from the projects they approve.
The two most common satisfaction functions considered in the PB literature are the \emph{cost satisfaction function} $\mu^c$, with $\mu^c(p)=c(p)$ and $\mu^c(P')=c(P')$, and the \emph{cardinality satisfaction function} $\mu^\#$, with $\mu^\#(p)=1$ and $\mu^\#(P')=|P'|$. Note that these are already implicitly defined for all possible projects in an instance-agnostic manner.

Both $\mu^c$ and $\mu^\#$ belong to the larger class of \emph{weakly decreasing normalized satisfaction} functions \citep{BFL+23a}. 
\begin{definition}\label{def:dns}
    An additive satisfaction function $\mu$ is a \emph{weakly decreasing normalized satisfaction (DNS)} function if for any two projects $p,p' \in P$ with $c(p)\leq c(p')$, the following hold:
    \begin{enumerate}[label={(\arabic*)},leftmargin=0.65cm]
        \item $\mu(p)\leq \mu (p')$, i.e., the satisfaction of more expensive projects is weakly greater than that of cheaper projects; \label{dns1}
        \item $\frac{\mu(p)}{c(p)}\geq \frac{\mu(p')}{c(p')}$, i.e., the satisfaction per unit cost of more expensive projects is weakly smaller than that of cheaper projects. \label{dns2}
    \end{enumerate}
    
\end{definition}
The cost and cardinality satisfaction functions can be viewed as the two extreme examples of DNS functions: condition \ref{dns1} holds with equality for $\mu^\#$, while condition \ref{dns2} holds with equality for $\mu^c$. A different example of a DNS function would be $\mu(p) = \sqrt{c(p)}$.

Any DNS function is not just additive, but also \emph{cost-neutral} \citep{BFL+23a}: for any two projects $p, p'$ with $c(p)=c(p')$, it holds that $\mu(p)=\mu(p')$. %
Hence, the satisfaction of a project depends solely on the cost of the project. Note that in multiwinner voting instances, all cost-neutral satisfaction functions are equivalent to each other (up to normalization).

For a given satisfaction function $\mu$, we can define the \emph{utilitarian welfare} of a project $p$ as 
\(uw_\mu(p)=\sum_{i\in N}\mu_i(p)=|N_p| \cdot \mu(p).\)  
We further let $uw_\mu(P')=\sum_{p\in P'} uw_\mu(p)$ for any $P'\subseteq P$. 
For a project $p$ and satisfaction function $\mu$, we define the \emph{value} $v_\mu(p)$ as the ratio of $p$'s utilitarian welfare to its cost (``bang per buck''), i.e., $v_\mu(p)= \frac{uw_\mu(p)}{c(p)}=\frac{|N_p|\mu(p)}{c(p)}$.
When $\mu$ is clear from the context, we will omit the subscript. 

\setlength{\tabcolsep}{3pt}
\begin{table}[t]
\small

    \centering
    \begin{tabular}{cccccccccccc}
         Project $p$ & $c(p)$ & $A_1$ & $A_2$ & $A_3$ & $A_4$ & $A_5$ & $A_6$ & $A_7$ & $A_8$ & $A_9$ & $A_{10}$   \\
         \midrule
         $p_1$ & $65$ & \cmark & \cmark &\cmark & \cmark & \cmark & \cmark\\
         $p_2$ & $60$ & \cmark &  \cmark & \cmark & \cmark & \cmark \\
         $p_3$ & $40$ & \cmark & \cmark & \cmark &  & &  & \cmark\\
         $p_4$ & $20$ & & & & & & \cmark & \cmark & \cmark &  \\
         $p_5$ & $20$ & & & & & & & & & \cmark & \cmark \\

    \end{tabular}
    \caption{Costs and approval sets for \Cref{ex:running}.}
    \label{tab:running}
\end{table}

\begin{example}\label{ex:running}
Consider the PB instance with budget $b=100$ and $P=\{p_1,p_2,p_3,p_4,p_5\}$, with costs and approval sets given in \Cref{tab:running}. 
For the cost satisfaction function, we have $\mu^c(p_1)\!=\!c(p_1)\!=\!65$, $uw_{\mu^c}(p_1)\!=\!65\cdot 6\!=\!390$, and $v_{\mu^c} \!=\! 6$. 
For the cardinality satisfaction function, we have $\mu^\#(p_1)\!=\!1$, $uw_{\mu^\#}(p_1)\!=\!|N_{p_1}| \!=\! 6$, and $v_{\mu^\#} \!=\! 6/65\!\approx \!0.09$. %
\end{example}

\paragraph{Voting Rules.}
A \emph{voting rule} $\alg{R}$ is a function mapping each instance $I$ to a nonempty set of outcomes $\alg{R}(I)$.  Here, we do not just introduce individual voting rules, but families of rules, parameterized by the satisfaction function $\mu$ they use.
The first two voting rules we consider are \emph{sequential}: each begins with an empty bundle $P_0 = \emptyset$ and then iteratively adds projects to it, selecting project $p_k$ during \emph{stage} $k$. Let $P_k=\{p_1,\dots,p_k\}$ be the set of projects already chosen by some sequential rule \alg{R} after some stage $k$ in its execution. We say that at this point an unchosen project $p\in P\setminus P_k$ is \emph{affordable} if $c(P_k)+c(p)\leq b$.

As our first family of sequential voting rules, we consider \emph{greedy voting rules}. Let $\mu$ be a satisfaction function.  
For a given instance, the rule \algmu{Greedy} iteratively chooses the unchosen affordable project $p$ with the highest value $v_\mu(p)$ until no further project is affordable.

As our second family of rules, we study the \emph{Method of Equal Shares} \citep{PPS21a}.
For a given satisfaction function $\mu$, the rule \algmu{MES} works as follows. Each voter $i \in N$ is assigned a budget $b_i$, which is initialized to $\frac{b}{n}$. Then, at any given stage, for each 
$p \in P \setminus W$ compute $\rho(p)$ such that $\sum_{i\in N_p} \min(b_i,\rho(p)\cdot\mu_i(p))=c(p)$; if no such value exists, we set $\rho(p)=\infty$. If $\rho(p)=\infty$ for every project, \algmu{MES} stops. Otherwise, it adds a project $p \in \argmin_{p \in C\setminus W} \rho(p)$ to the outcome and updates the voter budgets to $b_i = b_i - \min(b_i,\rho(p) \mu_i(p))$ for all $i \in N_p$.

Finally, we also consider the \emph{maximum satisfaction rule}, which serves as the benchmark for our utilitarian guarantees. For a given satisfaction function $\mu$, the rule \algmu{MaxSat} selects all feasible outcomes $P'$ maximizing $\sum_{i \in N} \mu_i(P')$. In multiwinner instances, \alg{MaxSat} and \alg{Greedy} produce the same outcome.

One well-known shortcoming of \alg{MES} is that it is not exhaustive. To make it exhaustive, we consider \emph{completion rules}. A completion rule takes a feasible outcome $P'$ and supplements it with additional projects, outputting a feasible and exhaustive superset $P'' \supseteq P'$. We let $\overline{\algmu{MES}}$ be \algmu{MES} \emph{completed} by the \algmu{Greedy} rule. This refers to running \algmu{MES} on the full instance, and then, once \algmu{MES} terminates, running \algmu{Greedy} with the remaining projects and budget. 

For \Cref{ex:running}, the outcomes of our voting rules are as follows (employing cost satisfaction): \alg{Greedy}$_{\mu^c}$ selects $P_G\!=\!\{p_1,p_4\}$ with $uw(P_G)=450$; 
\alg{MaxSat}$_{\mu^c}$ selects $P^*\!=\!\{p_2,p_3\}$ with $uw(P^*)=460$; and
\alg{MES}$_{\mu^c}$ selects $P_M\!=\!\{p_3,p_4\}$ with $uw(P_M)=220$. 
$\overline{\alg{MES}_{\mu^c}}$ additionally selects $p_5$ and achieves $uw(\{p_3,p_4,p_5\})=260$.

\paragraph{Proportionality.} \alg{MES} is a rule that produces a proportional outcome, in the sense that it provides an appropriate level of representation to sufficiently cohesive groups. Following \citet{PPS21a}, we define the notion of cohesiveness below.

\begin{definition} \label{def:cohesiveness}
    Let $T\subseteq P$. A group $N'\subseteq N$ of voters is \emph{$T$-cohesive} if and only if $T\subseteq \bigcap_{i\in N'} A_i$ and $c(T)\leq \frac{|N'|}{n}b$.
\end{definition}

Using the notion of cohesiveness, we can define the following proportionality axiom. Just like our voting rules, the axiom is parameterized by a satisfaction function. %

\begin{definition}\label{def:EJR1} 
    An outcome $P'\subseteq P$ satisfies \emph{Extended Justified Representation up to one project with respect to $\mu$} (EJR1$_\mu$) if for every $T$-cohesive group $N'$, either $T\subseteq P'$ or there exists a voter $i\in N'$ and a project $p\in A_i \cap (P\setminus P')$ such that $\mu_i(P')+\mu_i(p)>\mu_i(T)$.
\end{definition}

We say that a rule \alg{R} satisfies EJR1$_\mu$ if the outcome of \alg{R} satisfies EJR1$_\mu$ for every instance $I\in \mathcal{I}$. %
\algmu{MES} satisfies EJR1$_\mu$ for any satisfaction function $\mu$ \citep{PPS21a}.\footnote{For $\mu=\mu^\#$, EJR1$_{\mu}$ is equivalent to the EJR axiom used by \citet{FVMG22a}.
If $\mu$ is a DNS function, the outcome of \algmu{MES} even satisfies the stronger \emph{EJR$_\mu$ up to any project} \citep{BFL+23a}. For the cost satisfaction function $\mu^c$, \alg{MES}$_{\mu^c}$ satisfies the even stronger \emph{EJR+$_{\mu^c}$ up to any project} \citep{BrPe23a}.} This is not the case for \algmu{Greedy}; for example, \alg{Greedy}$_{\mu^c}$ violates EJR1$_{\mu^c}$ in \Cref{ex:running} due to voter group $\{9,10\}$, which is $\{p_5\}$-cohesive. %

\paragraph{Utilitarian Guarantees.} Fix a satisfaction function $\mu$ and let \alg{R} be a voting rule. Let $P_R=\alg{R}(I)$ be the outcome of \alg{R} for a particular PB instance $I=(P,A,b,c)$, and let  $P^*=\algmu{MaxSat}(I)$. 
We say that \alg{R} 
has a \emph{utilitarian ratio} (w.r.t.~$\mu$) of $r\in [0,1]$ for instance $I$ if $\frac{uw_\mu(P_R)}{uw_\mu(P^*)}=r$.

In \Cref{ex:running}, the utilitarian ratio (w.r.t.~$\mu^c$) of \alg{Greedy}$_{\mu^c}$  is $\frac{450}{460}\approx 0.98$ and that of $\overline{\alg{MES}_{\mu^c}}$ is $\frac{260}{460}\approx 0.57$.

Now let $g\colon\mathcal{I}\rightarrow [0,1]$ be a function that can depend on properties of a particular PB instance $I\in \mathcal{I}$. We say that \alg{R} has a \emph{utilitarian guarantee} of $g$ if, for any $I\in \mathcal{I}$, the outcome $P_R=\alg{R}(I)$ has a utilitarian ratio of at least $g(I)$.

In principle, $g$ may depend on any properties of the instance. However, the more parameters it depends on, the less meaningful it will be. At one extreme, we could set the guarantee of rule \alg{R} to be the utilitarian ratio of \alg{R} for every instance. At the other extreme, if we do not allow $g$ to depend on any parameters then we cannot do better than $g=0$.\footnote{To see this, consider the multiwinner guarantee upper bound of $\frac{2}{\sqrt{k}}-\frac{1}{k}$ \citep{LaSk20b} for arbitrarily large $k$.} %

We will formulate our utilitarian guarantees using only the following properties of a PB instance $I=(P,A,b,c)$:

\begin{itemize}
    \item the instance budget $b(I)=b$; %
    \item the \emph{minimum project cost} $c_{\min}(I)=\min_{p\in P} c(p)$; and 
    \item the \emph{maximum project cost} $c_{\max}(I)=\max_{p\in P} c(p)$.
\end{itemize}

In fact, all of our results will actually depend on just two cost-based parameters, the \emph{normalized minimum} and \emph{maximum project costs}: $\frac{c_{\min}(I)}{b(I)}$ and $\frac{c_{\max}(I)}{b(I)}$. We can think of the reciprocals of these, $k_1(I)=\frac{b(I)}{c_{\max}(I)}$ and $k_2(I)=\frac{b(I)}{c_{\min}(I)}$, as providing lower and upper bounds (up to rounding) on the number of projects that can be contained in any exhaustive outcome. These generalize the multiwinner concept of committee size to the PB setting, with $k_1$ and $k_2$ corresponding to \emph{minimum} and \emph{maximum committee size}, respectively. %

Similarly, we define $\mu_{\min}(I)=\min_{p\in P} \mu(p)$ and $\mu_{\max}(I)=\max_{p\in P}\mu(p)$ as the minimum and maximum satisfaction of any project in a given instance $I$. In our notation, we often omit $I$ when the instance is clear from context.

\section{Negative Results}\label{sec:negative_results}

As a warm up, we demonstrate that certain natural assumptions are necessary to obtain meaningful utilitarian guarantees. In particular, the negative results in this section motivate both (i) our focus on costs as parameters for the guarantees and (ii) our restriction to DNS satisfaction functions. 

In this section, we let \algmu{PROP} be a rule satisfying EJR1$_\mu$. For instance, \algmu{PROP} could be \algmu{MES} with any completion rule. %
We begin by showing that, for any DNS satisfaction function $\mu$, we can construct instances such that \algmu{PROP}, and even \algmu{Greedy}, achieve an arbitrarily low utilitarian ratio. 

We first consider the case of satisfaction functions that are non-trivially bounded from below, like the cardinality satisfaction function $\mu^\#$.

\begin{proposition}\label{prop:worstcase_bounded}
    Let $\mu$ be a DNS function and $\varepsilon > 0$ a constant such that $\mu(p) > \varepsilon$ for all $p \in \mathcal{P}$.
    Then, \algmu{PROP} and \algmu{Greedy} have a utilitarian guarantee of at most $\frac{1}{n-1}$.
\end{proposition}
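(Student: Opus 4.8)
The plan is to construct, for any given $n$, a PB instance where a proportional rule (or $\algmu{Greedy}$) is forced to select a low-welfare bundle while $\algmu{MaxSat}$ achieves welfare larger by a factor of roughly $n-1$. The natural template is the standard bad example for proportionality: have one ``big'' project $p^*$ approved by all $n$ voters, and a disjoint ``private'' project $q_i$ for each voter $i$ approved only by that voter, with costs arranged so that the budget fits exactly one project. Concretely, I would set $b = c(p^*) = c(q_i)$ for all $i$ (equal costs, so DNS-ness plays no role beyond cost-neutrality), so that any feasible outcome contains at most one project.

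The key steps, in order. First, observe that under these costs, $\algmu{MaxSat}$ picks $p^*$, giving utilitarian welfare $uw_\mu(p^*) = n \cdot \mu(p^*)$, which (since all costs are equal) equals $n \cdot \mu_0$ where $\mu_0$ is the common project satisfaction. Second, I need to force $\algmu{PROP}$ to \emph{not} pick $p^*$. To do this I exploit the EJR1$_\mu$ requirement: a single voter $\{i\}$ forms a $\{q_i\}$-cohesive group, since $c(q_i) = b = \frac{1}{n} \cdot (nb)$... wait, that is not quite cohesive with budget $b$. So I instead need the budget to be $b = n \cdot c$ with all projects of cost $c$ — but then the outcome can contain $n$ projects, which breaks the ``one project'' structure. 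The fix: make each private project $q_i$ cheap and $p^*$ expensive, calibrated so that exactly one copy of $p^*$, \emph{or} the $n$ private projects, exhausts the budget. Set $c(p^*) = b$ and $c(q_i) = b/n$ for each $i \in N$. Then each singleton $\{i\}$ is $\{q_i\}$-cohesive (as $c(q_i) = b/n = \frac{|\{i\}|}{n} b$), so EJR1$_\mu$ demands that either $q_i \in P'$ or $\mu_i(P') + \mu_i(q_i) > \mu_i(\{q_i\}) = \mu(q_i)$, i.e. $\mu_i(P') > 0$, i.e. $P'$ contains \emph{some} project approved by $i$. Since the only projects $i$ approves are $p^*$ and $q_i$, this means for every $i$, the outcome contains $p^*$ or $q_i$. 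If the outcome contains $p^*$ it exhausts the budget and contains nothing else, so then we would need $q_i$ for all $i$ — contradiction. Hence $\algmu{PROP}$ must select exactly $\{q_1,\dots,q_n\}$, with welfare $\sum_i |N_{q_i}| \mu(q_i) = n \cdot 1 \cdot \mu(q_i) = n\,\mu(q_i)$. For $\algmu{Greedy}_\mu$: $v_\mu(p^*) = \frac{n\,\mu(p^*)}{b}$ while $v_\mu(q_i) = \frac{\mu(q_i)}{b/n} = \frac{n\,\mu(q_i)}{b}$; by DNS property \ref{dns2}, $\mu(p^*)/c(p^*) \le \mu(q_i)/c(q_i)$, so $v_\mu(q_i) \ge v_\mu(p^*)$ and $\algmu{Greedy}$ also picks the $q_i$'s first, again exhausting the budget on $\{q_1,\dots,q_n\}$.

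Third, I compute the ratio. The optimum welfare is $n\,\mu(p^*)$ (picking $p^*$), while both rules achieve $n\,\mu(q_i)$. Hmm — this gives ratio $\mu(q_i)/\mu(p^*)$, which by DNS property \ref{dns1} is $\le 1$ but need not be as small as $\frac{1}{n-1}$ in general; to make it that small I want $\mu(p^*)$ as large as possible relative to $\mu(q_i)$. Since $c(p^*)/c(q_i) = n$, DNS \ref{dns1} only guarantees $\mu(p^*) \ge \mu(q_i)$, and the worst case over DNS functions (e.g. $\mu^c$) gives $\mu(p^*) = n\,\mu(q_i)$, yielding ratio $1/n$. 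But the claim is $\frac{1}{n-1}$, so I suspect the intended construction keeps all projects at equal cost and uses a slightly different budget — e.g. $b$ equal to the common cost times something, with $n-1$ private projects plus $p^*$, so that $\algmu{MaxSat}$ can pick $p^*$ giving welfare $n\mu_0$ while the proportional rule is forced onto $n-1$ private projects for welfare $(n-1)\mu_0$, ratio exactly $\frac{n-1}{... }$ — I would reconcile the exact constant by tuning how many private projects and the budget so that the losing bundle has welfare a factor $n-1$ below optimum. The main obstacle is precisely this calibration: getting EJR1$_\mu$ to bite (forcing the private projects in) while simultaneously ensuring the resulting forced bundle is a factor $\ge n-1$ worse than the all-voter project, uniformly over all DNS $\mu$ with $\mu > \varepsilon$, without the $\varepsilon$ lower bound or cost ratios letting the adversary-rule escape with a better bundle.
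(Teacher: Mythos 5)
Your proposal has a genuine gap --- in fact two. First, in your final construction ($p^*$ with $c(p^*)=b$ approved by all $n$ voters, and $q_i$ with $c(q_i)=b/n$ approved only by voter $i$), the value comparison for \algmu{Greedy} is wrong: $v_\mu(p^*)=\frac{n\,\mu(p^*)}{b}$ while $v_\mu(q_i)=\frac{1\cdot\mu(q_i)}{b/n}=\frac{n\,\mu(q_i)}{b}$, and DNS condition \ref{dns1} gives $\mu(q_i)\le\mu(p^*)$, so $v_\mu(q_i)\le v_\mu(p^*)$ --- you dropped the factor coming from the number of supporters. Hence \algmu{Greedy} picks $p^*$ and achieves the optimum, so the construction says nothing about \algmu{Greedy}. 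Second, even for \algmu{PROP} the welfare ratio you obtain is $\mu(q_i)/\mu(p^*)$, which, as you yourself note, is not bounded by $\frac{1}{n-1}$ uniformly over DNS functions; for $\mu=\mu^\#$ it equals $1$ and there is no separation at all. Your closing paragraph acknowledges this calibration problem but does not resolve it, so the proof is incomplete. A telling symptom is that your argument never uses the hypothesis $\mu(p)>\varepsilon$.

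The paper's construction is structurally different and resolves both issues at once. It uses only two projects: $p_1$ with $c(p_1)=b$ approved by the $n-1$ voters $N\setminus\{1\}$, and a single blocking project $p_2$ approved only by voter $1$ with cost $c(p_2)=\frac{\varepsilon b}{n\mu(p_1)}$, which is \emph{much smaller} than $b/n$. The $\varepsilon$ lower bound is exactly what makes this work: since $\mu(p_2)>\varepsilon$ no matter how cheap $p_2$ is, shrinking $c(p_2)$ drives the value up, $v(p_2)>\frac{\varepsilon}{\varepsilon b/(n\mu(p_1))}=\frac{n\mu(p_1)}{b}>\frac{(n-1)\mu(p_1)}{b}=v(p_1)$, so \algmu{Greedy} takes $p_2$ first and then cannot afford $p_1$; simultaneously $c(p_2)<b/n$ makes $\{1\}$ a $\{p_2\}$-cohesive group, forcing \algmu{PROP} to include $p_2$ and thereby exclude $p_1$. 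The factor $n-1$ then comes from the number of \emph{supporters} of $p_1$, not from a count of projects: the bad outcome is worth $uw(p_2)=\mu(p_2)\le\mu(p_1)$ (one project, one voter) against $uw(p_1)=(n-1)\mu(p_1)$, giving a ratio of at most $\frac{1}{n-1}$ for every DNS $\mu$ with $\mu>\varepsilon$. Your instinct to use per-voter cohesive groups to force projects in is sound, but the separation has to come from a single tiny, single-voter project blocking a budget-filling, widely approved one --- not from many equal-cost private projects.
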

\begin{proof}
    Consider a PB instance with $n\geq 3$ voters and a project set $P$ containing two projects: project $p_1$ with $N_{p_1}=N\setminus \{1\}$ and $c(p_1)=b$, and
    project $p_2$ with $N_{p_2}=\{1\}$ and $c(p_2)= \frac{\varepsilon b}{n\mu(p_1)}$.
    This instance has two exhaustive outcomes, $\{p_1\}$ and $\{p_2\}$, as $c(p_1)+c(p_2)>b$.
Since $\mu$ is a DNS function, we know that $p_1$ has higher utilitarian welfare than $p_2$. Hence, \algmu{MaxSat} chooses the outcome $\{p_1\}$.
    Observe that 
    \(
    v(p_1) = \frac{(n-1)\mu(p_1)}{c(p_1)} = \frac{(n-1)\mu(p_1)}{b} < \frac{n\mu(p_1)}{b}
    \)
    and 
    \(
    v(p_2) = \frac{\mu(p_2)}{c(p_2)} > \frac{\varepsilon}{\frac{\varepsilon b}{n\mu(p_1)}} = \frac{n\mu(p_1)}{b}.
    \)
    Thus, $v(p_1)\! < \!v(p_2)$ and \algmu{Greedy} selects the outcome $\{p_2\}$. 
    As $\frac{\varepsilon}{\mu(p_1)} < 1$, we know that $c(p_2) < \frac{b}{n}$ and can thus be afforded by voter $1$. Thus, $\{1\}$ is a $\{p_2\}$-cohesive group, and as voter $1$ approves no other projects, we get that \algmu{PROP} must select $\{p_2\}$.
Using DNS assumption \ref{dns1}, this implies that the utilitarian ratio of \algmu{Greedy} and \algmu{PROP} is $\frac{uw(p_2)}{uw(p_1)}=\frac{\mu(p_2)}{(n-1)\mu(p_1)}\leq \frac{1}{n-1}$. 
\end{proof}

Therefore, we can choose an instance with a sufficiently large number of voters to make the utilitarian ratio of \alg{Greedy} and \alg{PROP} arbitrarily bad. Notably, \Cref{prop:worstcase_bounded} tightens the upper bound on the utilitarian guarantee (when parameterized only by $n$ only) for proportional rules with $\mu=\mu^\#$ from $O(1/\sqrt{n})$ \citep{FVMG22a} to $O(1/n)$.

Next, we consider the case of satisfaction functions that can get arbitrarily small, like the cost satisfaction function~$\mu^c$.

\begin{restatable}{proposition}{worstunb}\label{prop:worstcase_limiting}
    Let $\mu$ be a DNS satisfaction function such that for any $\varepsilon>0$ there exists a project $p \in \mathcal{P}$ with $\mu(p)<\varepsilon$. %
    Then, for any $\delta>0$, \algmu{Greedy} and \algmu{PROP} do not achieve a utilitarian guarantee of $\delta$. 
\end{restatable}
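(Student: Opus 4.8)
Fix $\delta>0$; we may assume $\delta\le 1$, since utilitarian guarantees take values in $[0,1]$. The plan is to reuse the two‑project skeleton from the proof of \Cref{prop:worstcase_bounded}, but to let the new hypothesis play the role that the parameter $n$ played there. The first observation is that, for a DNS function, the hypothesis yields something stronger than stated: since DNS functions are cost‑neutral and weakly increasing in cost (property~\ref{dns1}), whenever some project has satisfaction below a value $\eta$, so does every project of weakly smaller cost; hence for every $\eta>0$ there is a project of cost at most $b/2$ with satisfaction less than $\eta$. Informally, $\mu(p)\to 0$ as $c(p)\to 0^+$.

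I would then build the following instance. Take $n=2$ voters and two projects: an expensive project $p_1$ with $c(p_1)=b$ and $N_{p_1}=\{1\}$, and a cheap project $p_2$ with $N_{p_2}=\{1,2\}$ and cost $c(p_2)=\varepsilon\le b/2$, where $\varepsilon$ is chosen so small that $2\mu(p_2)<\delta\cdot\mu(p_1)$ — possible by the first observation, with $\eta=\delta\,\mu(p_1)/2$. Thus $A_1=\{p_1,p_2\}$ and $A_2=\{p_2\}$. The one genuine difference from \Cref{prop:worstcase_bounded} is that here $p_2$ needs strictly more supporters than $p_1$: otherwise, for $\mu=\mu^c$ the two projects tie on value and \algmu{Greedy} need not fail. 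We can afford this only because we drive $\mu(p_2)$ to zero, so that $\{p_1\}$ stays welfare‑optimal despite having fewer supporters.

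The three things to check are: (i) \algmu{MaxSat} selects $\{p_1\}$ with $uw_\mu(P^*)=\mu(p_1)$ — the feasible outcomes are $\emptyset$, $\{p_1\}$ and $\{p_2\}$ since $c(p_1)+c(p_2)>b$, and $uw_\mu(\{p_2\})=2\mu(p_2)<\delta\,\mu(p_1)\le\mu(p_1)=uw_\mu(\{p_1\})$; (ii) \algmu{Greedy} selects $\{p_2\}$ — by property~\ref{dns2} applied to $c(p_2)=\varepsilon\le b=c(p_1)$ we have $v_\mu(p_2)=2\mu(p_2)/\varepsilon\ge 2\mu(p_1)/b>\mu(p_1)/b=v_\mu(p_1)$ (strict since $\mu(p_1)>0$), so \algmu{Greedy} picks $p_2$ first and then cannot afford $p_1$; (iii) \algmu{PROP} selects $\{p_2\}$ — as $\varepsilon\le b/2$, the singleton group $\{2\}$ is $\{p_2\}$‑cohesive, and since $A_2=\{p_2\}$, the EJR1$_\mu$ alternative in \Cref{def:EJR1} fails for the only available project (if $p_2\notin P'$ then $p=p_2$ and $\mu_2(P')+\mu_2(\{p_2\})=\mu(p_2)$, which does not exceed $\mu_2(\{p_2\})=\mu(p_2)$), analogously to voter~$1$ in the proof of \Cref{prop:worstcase_bounded}; so \algmu{PROP} must include $p_2$ and then cannot afford $p_1$. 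Hence both \algmu{Greedy} and \algmu{PROP} have utilitarian ratio $2\mu(p_2)/\mu(p_1)<\delta$ on this instance, and since $\delta>0$ was arbitrary, neither rule has a utilitarian guarantee of $\delta$.

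The one step needing care — and where the argument genuinely departs from \Cref{prop:worstcase_bounded} — is the cost satisfaction function: under $\mu^c$, \algmu{Greedy} orders projects purely by supporter count, so one expensive and one cheap project tie and \algmu{Greedy} might hit the optimum; breaking this tie forces the cheap project to be the more popular one, which stays harmless only because the hypothesis lets us shrink its satisfaction (and cost) to zero. Everything else is a routine calculation once $\varepsilon$ is fixed.
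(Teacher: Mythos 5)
Your proposal is correct and takes essentially the same approach as the paper's proof: a two-project instance in which a cheap project of negligible satisfaction but strictly higher value, backed by a cohesive group whose members approve nothing else, blocks the expensive welfare-optimal project (the paper uses $n\ge 3$ voters with disjoint approval sets where you use $n=2$ with voter $1$ approving both projects, but the \alg{MaxSat}/\alg{Greedy}/EJR1 verifications are identical). The only quibble is your preliminary observation: guaranteeing a project of cost at most $b/2$ with satisfaction below $\eta$ really relies on DNS condition~\ref{dns2} together with the strict positivity of $\mu$ (to force the costs of vanishing-satisfaction projects to vanish as well), not on condition~\ref{dns1} alone, though the paper's own proof is equally terse on this existence point.
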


\Cref{prop:worstcase_bounded,prop:worstcase_limiting} combine to show that utilitarian guarantees for DNS satisfaction functions for both \alg{Greedy} and \alg{PROP} can be arbitrarily bad. It's worth emphasizing the fact that these results hold for \alg{Greedy} --- a rule whose explicit goal is to (greedily) maximize utilitarian welfare. This mirrors similar results for the greedy knapsack algorithm (see e.g. \citet{KPP04a}). 
The instances constructed in both proofs involve a very small project preventing the purchase of a larger one, which is reminiscent of the "Helenka Paradox" discussed by \citet{GPS+24a}. This suggests that we could potentially obtain better results for instances where project costs are closer together.

\smallskip 

We now motivate the idea of constraining ourselves to DNS satisfaction functions by showing that no matter how much we limit project costs (without making them all equal), we can find a non-DNS satisfaction function $\mu$ that would make any proportional rule achieve an arbitrarily low utilitarian ratio. %

\begin{restatable}{proposition}{wnondns}\label{prop:worstcase_non_dns}
Fix $k_1,k_2$ with $1\leq k_1<k_2$ and $\varepsilon > 0$. There exists a PB instance $I$ with $c_{\max}(I)=\frac{b}{k_1}$ and \mbox{$c_{\min}(I)=\frac{b}{k_2}$}, and a (non-DNS) satisfaction function $\mu$, such that the utilitarian ratio achieved by \algmu{PROP} is at most $\varepsilon$. 
\end{restatable}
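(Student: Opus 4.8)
The plan is to construct an instance where a proportional rule is \emph{forced} to buy cheap projects that have high satisfaction but are supported by only a single voter, while the true optimum (under the weird $\mu$) is to buy one expensive, widely-approved project. The key trick is that, since $\mu$ need not be DNS, we can make a \emph{cheap} project have enormous satisfaction $\mu(p_{\text{cheap}})$ and an \emph{expensive} project have tiny satisfaction $\mu(p_{\text{exp}})$ — exactly the opposite of what DNS would allow. This breaks the argument used for the positive results, because a single-voter cohesive group can then ``demand'' a cheap-but-high-$\mu$ project and thereby block the expensive-but-also-high-$uw$ project.

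Concretely, I would take a budget $b$, set $c_{\max}=\frac{b}{k_1}$ and $c_{\min}=\frac{b}{k_2}$, and build the instance roughly as follows. Have one expensive project $p^*$ with cost $\frac{b}{k_1}$, approved by almost all of the $n$ voters; and have many cheap ``dummy'' projects each of cost $\frac{b}{k_2}$, each approved by a distinct small set of voters (say a single voter, or a $\frac{n}{k_2}$-fraction group so that cohesiveness kicks in with just that group's per-capita budget). Pad with enough voters and enough cheap projects so that (i) each cheap project's supporter group is $\{p_{\text{cheap}}\}$-cohesive, forcing \algmu{PROP} to include it, and (ii) the cheap projects collectively consume essentially the whole budget, leaving no room for $p^*$. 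Now choose $\mu$: set $\mu(p^*)$ to be some fixed value, and set $\mu$ of each cheap project so that $uw_\mu$ of the cheap bundle that \algmu{PROP} is forced to pick is tiny compared with $uw_\mu(p^*) = |N_{p^*}|\cdot \mu(p^*)$. Since we only need $\mu$ defined on the finitely many projects appearing in $I$ and consistent extension to $\mathcal{P}$ is free, we can pick these numbers independently — in particular we can make $\mu(p_{\text{cheap}})$ small enough (relative to $\mu(p^*)$ and the $k_2$ vs.\ $k_1$ gap and $\varepsilon$) that $\frac{uw_\mu(\text{outcome of }\algmu{PROP})}{uw_\mu(\{p^*\})}\le \varepsilon$, while simultaneously $\mu(p_{\text{cheap}}) > \mu(p^*)$ so that $\mu$ is genuinely non-DNS (violating \ref{dns1}).

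The main technical care is in the feasibility bookkeeping: I need the cheap projects to be numerous enough and their support groups structured so that (a) they really are cohesive (which needs $c(p_{\text{cheap}})=\frac{b}{k_2}\le \frac{|N'|}{n}b$, i.e.\ the group has at least $\frac{n}{k_2}$ voters — easy if I take the group to be exactly a $1/k_2$ fraction, or alternatively make each cheap project cost small enough that a single voter's share $\frac{b}{n}$ covers it by choosing $n$ large), (b) \algmu{PROP}, once it has satisfied all these cohesive demands, has already (essentially) exhausted the budget so it cannot also afford $p^*$, and (c) \algmu{MaxSat}$_\mu$ nonetheless prefers $\{p^*\}$, which holds because $uw_\mu(p^*)$ dwarfs the total $uw_\mu$ of everything else. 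A clean way to arrange (b) is to have the cheap projects be pairwise ``budget-competing'' in the sense that the total cost of all cheap projects plus $p^*$ exceeds $b$, yet any feasible outcome satisfying all the single-group demands must include enough cheap projects to crowd out $p^*$; the simplest realization is to make each cohesive group's \emph{only} approved project be its own cheap project, so EJR1$_\mu$ literally forces that project in, and make the number of such groups times $\frac{b}{k_2}$ come to just under $b$.

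The step I expect to be the real obstacle is reconciling the two cost constraints with the forced-inclusion argument: I must hit $c_{\min}=\frac{b}{k_2}$ exactly and $c_{\max}=\frac{b}{k_1}$ exactly while keeping the cheap-project groups large enough to be cohesive and numerous enough to exhaust the budget. If $k_2$ is only slightly larger than $k_1$, the cheap and expensive costs are close, so I may need many voters and a careful count of how many cheap projects fit; I would handle this by letting $n$ be a large multiple of $k_2$, using exactly $k_2$ (or $k_2-1$) cheap projects each of cost $\frac{b}{k_2}$ with disjoint supporter groups of size $n/k_2$, plus the single expensive project $p^*$ approved by, say, all $n$ voters (or all but the groups needed to keep \algmu{MaxSat} honest). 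Then one verifies: \algmu{PROP} must take all cheap projects (each group cohesive and myopic), exhausting the budget and excluding $p^*$; \algmu{MaxSat}$_\mu$ takes $\{p^*\}$ because with $\mu(p^*)$ fixed and $|N_{p^*}|=n$ we get $uw_\mu(p^*)=n\mu(p^*)$, against which $uw_\mu$ of the cheap bundle is at most $n\cdot\mu(p_{\text{cheap}})$, so choosing $\mu(p_{\text{cheap}})\le \varepsilon\,\mu(p^*)$ gives the ratio bound; and $\mu$ is non-DNS as soon as $\mu(p_{\text{cheap}})>\mu(p^*)$ with $c(p_{\text{cheap}})<c(p^*)$ — but since we instead want it \emph{small}, we violate \ref{dns2} instead: $\frac{\mu(p_{\text{cheap}})}{c(p_{\text{cheap}})}<\frac{\mu(p^*)}{c(p^*)}$ with $c(p_{\text{cheap}})<c(p^*)$ already contradicts \ref{dns2}, so the construction is automatically non-DNS. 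This last observation is the clean way to discharge the ``non-DNS'' requirement without any extra work.
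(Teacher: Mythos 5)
Your construction is essentially the paper's: there, a single large cohesive group $N\setminus\{1\}$ approves only $\lfloor b(n-1)/(n c_{\min})\rfloor$ cheap projects of cost $c_{\min}$ (forcing an exhaustive all-cheap outcome), while one leftover voter approves an expensive project of cost $c_{\max}$ whose satisfaction is set to $n^2$ against $\varepsilon$ for the cheap ones --- the same ``forced cheap bundle crowds out a project with artificially enormous satisfaction per unit cost'' idea, violating condition (2) of the DNS definition exactly as you observe at the end. One warning: you must commit to your ``myopic groups'' realization and drop the variant in which $p^*$ is approved by all $n$ voters --- in that variant any outcome containing $p^*$ already gives every cheap-group voter satisfaction exceeding $\mu(\{p'_j\})$, so the ``up to one project'' escape clause of EJR1$_\mu$ is satisfied without the cheap projects, and an EJR1 rule may legitimately output $p^*$ plus fillers; the supporters of each cheap project must approve nothing else, and $p^*$ must be approved only by the leftover voters (whose group is not $\{p^*\}$-cohesive since $b/k_1 > b/k_2$, so no conflicting demand arises). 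Finally, since the proposition does not restrict $k_1,k_2$ to integers, you need the floor-based bookkeeping you already anticipate; the paper handles this by taking $n\ge b/(c_{\max}-c_{\min})$ so that the forced cheap bundle costs more than $b-c_{\max}$ and is therefore exhaustive.
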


Together, \Cref{prop:worstcase_bounded,prop:worstcase_limiting,prop:worstcase_non_dns} motivate the need for using costs as utilitarian guarantee parameters, and for constraining ourselves to DNS satisfaction functions. In the next sections, we will see that combining these allows us to obtain meaningful utilitarian guarantees for proportional rules. 

\section{Comparing the Welfare of MES and Greedy}\label{sec:mes_greedy}

In this section, we compare the utilitarian welfare generated by \algmu{MES} to that generated by \algmu{Greedy}, and bound the ratio of these from below. We can think of this as a \emph{comparative} utilitarian guarantee. We will use our results here to derive a utilitarian guarantee for \algmu{MES} in \Cref{sec:mes_guarantee}.

\begin{theorem}\label{thm:mesovergreedy}
Let $\mu$ be a DNS satisfaction function, and consider a PB instance $I$. Let $P_G$ and $P_M$ be the outcomes of \algmu{Greedy} and \GMmu{}, respectively, for this instance. Then,
$$\frac{uw(P_M)}{uw(P_G)}\geq 2\sqrt{\frac{c_{\min}(I)}{b(I)}}-\frac{c_{\min}(I)+c_{\max}(I)}{b(I)}.$$
\end{theorem}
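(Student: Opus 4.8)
The plan is to compare the two outcomes through a value threshold $\tau>0$ that is fixed now and optimized only at the end, generalizing to variable project costs the equal-cost (multiwinner) analysis of \citet{BrPe24a}. First dispose of the trivial case: if $c(P)\le b$ then \algmu{Greedy} and \GMmu{} both select all of $P$, so the ratio is $1$ and the bound (which is at most $1$) holds. Otherwise split $P_G$ into its high-value part $P_G^{>\tau}=\{p\in P_G : v(p)>\tau\}$ and its low-value part $P_G^{\le\tau}$. The low-value part is cheap to account for: $uw(P_G^{\le\tau})=\sum_{p\in P_G^{\le\tau}} v(p)c(p)\le \tau\, c(P_G^{\le\tau})\le\tau b$. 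Hence all but at most $\tau b$ of $uw(P_G)$ is contributed by projects of value exceeding $\tau$, and the remaining task is to show that \GMmu{} recovers most of that high-value welfare. We will also use the crude bound $uw(P_G)\le v^{\max}b$, where $v^{\max}=\max_p v(p)$, which holds since $c(P_G)\le b$ and every $p\in P_G$ has $v(p)\le v^{\max}$.

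\textbf{The core lower bound on $uw(P_M)$.} Here I would exploit three structural features of \GMmu{}. (i) \algmu{MES} processes projects in increasing order of the price $\rho(\cdot)$ (prices only rise as voter budgets shrink), and the defining equation $c(p)=\sum_{i\in N_p}\min(b_i,\rho(p)\mu_i(p))\le\rho(p)\,uw(p)$ gives $\rho(p)\ge 1/v(p)$; thus every purchase \algmu{MES} makes at price at most $1/\tau$ has value at least $\tau$, these purchases come first, and each converts cost into welfare at a rate of at least $\tau$. Moreover, once \algmu{MES} has made all purchases of price $\le 1/\tau$, any still-unbought high-value project $p$ has $\rho(p)>1/\tau\ge 1/v(p)$, so the combined remaining budget of its supporters is strictly less than $c(p)$ — its (cohesive) supporters are depleted, and their money was spent on earlier purchases. (ii) The \algmu{Greedy} completion processes the leftover projects in decreasing value order, hence it grabs \emph{every} still-unselected project of value $>\tau$ that fits; since \GMmu{} is exhaustive we have $c(P_M)>b-c_{\max}$, so the budget is essentially fully spent, and a high-value project can only end up missing if it no longer fit, i.e. after the budget was filled (to within $c_{\max}$) by other projects that were themselves secured at good value-rate. (iii) For the depleted-supporter case in (i), the EJR1$_\mu$ guarantee of \GMmu{} — which it inherits because completing an EJR1$_\mu$ outcome preserves EJR1$_\mu$ — applied to the cohesive group of supporters together with the bundle of high-value projects they "want" bounds the welfare lost this way against welfare actually delivered to those voters. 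Assembling (i)--(iii) gives two competing lower bounds on $uw(P_M)$: one of the form $uw(P_M)\ge uw(P_G)-\tau b$ (the high-value projects not discarded by the threshold are essentially all recovered), and one of the form $uw(P_M)\ge \tau\cdot g(\tau)$ capturing the cost \GMmu{} is guaranteed to spend at value-rate $\ge\tau$, where $g$ degrades as $\tau$ shrinks and carries an additive loss of order $c_{\max}$ (one expensive project blocking the budget) together with an order-$c_{\min}$ term (exhaustiveness slack / a final forced purchase).

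\textbf{Combining and optimizing.} Dividing both lower bounds by $uw(P_G)$ and using $uw(P_G)\le v^{\max}b$, one arrives at a bound of the form $uw(P_M)/uw(P_G)\ge\max_{0<\tau\le v^{\max}}\min(1-\tfrac{\tau b}{uw(P_G)},\ h(\tau))$. Choosing $\tau$ so that the two arguments of the $\min$ coincide is an AM--GM-type balancing that turns the right-hand side into an expression of the form $1-\bigl(1-\sqrt{c_{\min}/b}\bigr)^2-\tfrac{c_{\max}}{b}=2\sqrt{c_{\min}/b}-\tfrac{c_{\min}+c_{\max}}{b}$, which is exactly the claimed guarantee; as a sanity check, specializing to equal costs $c$ with $b=kc$ recovers the $\tfrac{2}{\sqrt k}-\tfrac{2}{k}$ bound of \citet{BrPe24a}.

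\textbf{Main obstacle.} The hard part is the core lower bound on $uw(P_M)$ in the second paragraph: one must control the welfare \algmu{MES} "wastes" when proportionality forces it to fund a low-value (but cohesive) project while a high-value one is still available, and show that the \algmu{Greedy} completion repairs enough of this that the net loss — together with the discreteness of the budget — stays confined to the two additive boundary terms of order $c_{\max}/b$ and $c_{\min}/b$, rather than accumulating over many such forced purchases. This is precisely where the price-ordering of \algmu{MES} and the EJR1$_\mu$ property must be used in tandem, and it is the step where going beyond the equal-cost setting of \citet{BrPe24a} requires genuinely new bookkeeping.
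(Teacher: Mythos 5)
Your high-level shape (shared high-value welfare, a guaranteed spending rate for the MES part, an AM--GM balancing) matches the paper's, but the threshold decomposition you build it on cannot reach the stated bound, and the gap is quantitative, not cosmetic. Your two competing bounds amount to (A) $uw(P_M)\ge uw(P_G)-\tau b$ when every Greedy project of value $>\tau$ survives into $P_M$, and (B) $uw(P_M)\ge (b-c_{\max})\cdot\min(\tau,\tfrac{n}{b}\mu_{\min})$ otherwise, where $\tfrac{n}{b}\mu_{\min}$ is the most you can guarantee for the value of an MES purchase (since $c(p)\le|N_p|\tfrac{b}{n}$). To turn (B) into a ratio you must divide by $uw(P_G)$, and the only upper bound you offer is $uw(P_G)\le v^{\max}b\le\tfrac{n\mu_{\min}}{c_{\min}}\,b$. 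Plugging this in gives a ratio of order $\tfrac{c_{\min}}{b}$, not $\sqrt{\tfrac{c_{\min}}{b}}$; moreover $\tau$ is not genuinely free to optimize, since (B) saturates at $\tau=\tfrac{n}{b}\mu_{\min}$, so the balancing you describe has nothing to balance against. What is missing is a link between how valuable the top Greedy projects are and how much of that value MES provably shares.

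The paper supplies exactly this link. It first shows (Lemma~\ref{lemma:sameprojects}) that \algmu{MES} selects \emph{identical} projects to \algmu{Greedy} until the first stage $i+1$ at which some voter is budget-limited for the Greedy choice $p_{i+1}$; the common prefix contributes welfare at least $\zeta v^*$ to both outcomes, where $\zeta$ is its cost and $v^*=v(p_{i+1})$, while $uw(P_G)\le b v^*$. Crucially, the budget-limited condition is then converted into the lower bound $\zeta\ge(\alpha-1)c_{\min}\mu^*/\mu_{\min}$ with $\alpha=|N_{p_{i+1}}|b/(n\,c(p_{i+1}))$, so a high divergence value $v^*=\alpha\mu^*\tfrac{n}{b}$ forces a long agreement prefix; the AM--GM is applied in the variable $\alpha$ (in the main case $\alpha\ge1$, with two boundary cases handled separately), and this is what produces the $2\sqrt{c_{\min}/b}$ term. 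Your item (i) gestures at MES's price ordering but never establishes that MES's early purchases are \emph{Greedy's} early purchases, nor quantifies the agreement length in terms of $v^*$. Two smaller problems: $\rho(p)>1/v(p)$ only means some supporter cannot pay the equal share $c(p)/|N_p|$, not that the supporters' combined remaining budget is below $c(p)$; and EJR1$_\mu$ plays no role in the paper's argument --- everything runs through the internal accounting of \algmu{MES} (including the step showing that completion-chosen projects in the relevant window also have value at least $\tfrac{n}{b}\mu_{\min}$, or else $P_G=P_M$), and it is unclear how a per-group representation axiom would deliver the utilitarian bookkeeping you assign to it in item (iii).
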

This guarantee mirrors the $\frac{2}{\sqrt{k}} - \frac{2}{k}$ guarantee for multiwinner voting \citep{BrPe24a}.
Similarly to the proof of \citet{BrPe24a}, our proof utilizes the fact that \algmu{Greedy} and \algmu{MES} will select the same projects for a while, before diverging once voters start running out of their budgets during the execution of \algmu{MES}. In order to determine how early this divergence might occur, we introduce the concept of \emph{budget-limited} voters. %
\begin{definition}
    Let $P_k$ be the set of projects selected by \alg{MES} after stage $k$ of its execution. We say that voter $i$ is \emph{budget-limited} for an unchosen project $p\in A_i \setminus P_k$, if $p$ is affordable, but splitting its cost equally among its supporters would result in $i$ exceeding their personal budget. Formally, if $b_i$ is the remaining budget of voter $i$ after stage $k$, voter $i$ is budget-limited for $p$ if $p$ is affordable but $c(p)/|N_p| > b_i$. 
\end{definition}

We show that \algmu{MES} selects the same projects as \algmu{Greedy} until the stage in its execution where, for the first time, a voter is budget-limited for the affordable, unchosen project with highest value.

\begin{restatable}{lemma}{sameprojects}\label{lemma:sameprojects}

    Let $p_j$ be the project chosen in stage $j$ by \algmu{Greedy} and $p'_j$ be the project chosen in stage $j$ by \algmu{MES}. If for each $j\in \{1,\dots,k\}$, no voter is budget-limited for $p_j$ at stage $j$ of \algmu{MES}, then $p'_k=p_k$.
\end{restatable}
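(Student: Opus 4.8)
The plan is to prove the claim by induction on $k$, showing that at every stage $j \le k$, the two rules have selected exactly the same set of projects, i.e. $\{p'_1,\dots,p'_j\} = \{p_1,\dots,p_j\}$, and moreover that $p'_j = p_j$. The base case is $j=0$ (both start from $\emptyset$), so suppose the claim holds through stage $j-1$, meaning $P_{j-1} = \{p_1,\dots,p_{j-1}\}$ is the common bundle after stage $j-1$ of both rules. The induction hypothesis of the lemma tells us no voter is budget-limited for $p_j$ at stage $j$ of \algmu{MES}. The key quantity to track is the $\rho$-value that \algmu{MES} computes: I would show that for \emph{any} affordable unchosen project $p$ at stage $j$ for which no supporter of $p$ is budget-limited, we have $\sum_{i \in N_p}\rho(p)\mu_i(p) = c(p)$ with every $\min$ in the defining equation resolving to the second argument, hence $\rho(p) = c(p)/\sum_{i\in N_p}\mu_i(p) = c(p)/uw(p) = 1/v_\mu(p)$. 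So for such projects, minimizing $\rho(p)$ is exactly the same as maximizing $v_\mu(p)$.

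The next step is to argue that the project $p_j$ that \algmu{Greedy} picks — the affordable unchosen project of maximum value — is among the projects for which \algmu{MES}'s $\rho$ equals $1/v_\mu$. This is immediate from the hypothesis: no voter is budget-limited for $p_j$, so by the observation above $\rho(p_j) = 1/v_\mu(p_j)$. Now I need to rule out that \algmu{MES} picks some \emph{other} project $p \ne p_j$ at stage $j$. There are two cases. If $p$ has no budget-limited supporter, then $\rho(p) = 1/v_\mu(p) \ge 1/v_\mu(p_j) = \rho(p_j)$ since $p_j$ has maximum value; if the inequality is strict then \algmu{MES} cannot pick $p$ over $p_j$, and if it is an equality we are free (via the standard tie-breaking alignment between the two rules — I would assume the rules share a tie-breaking order, as is implicit in the statement "the project chosen in stage $j$") to say $p'_j = p_j$. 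If instead $p$ \emph{does} have a budget-limited supporter, I need to show $\rho(p) \ge \rho(p_j)$ as well; here some supporter $i$ of $p$ contributes only $b_i < c(p)/|N_p|$, so to still cover $c(p)$ the others must contribute more, forcing $\rho(p)$ to be at least as large as the value it would have taken without the $\min$ truncation — and in the unrestricted case $\rho(p)$ would be $\le 1/v_\mu(p)$ only if... actually the cleaner statement is that truncation can only \emph{increase} $\rho$, so $\rho(p) \ge c(p)/\sum_{i \in N_p}\mu_i(p) = 1/v_\mu(p) \ge 1/v_\mu(p_j) = \rho(p_j)$. Again ties are broken consistently with \algmu{Greedy}.

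The main obstacle I anticipate is handling the tie-breaking carefully and making precise the claim that \algmu{MES}'s $\rho(p) = 1/v_\mu(p)$ exactly when no supporter is budget-limited — in particular one has to be careful that "$p$ affordable and $c(p)/|N_p| > b_i$ for no $i \in N_p$" is genuinely equivalent to "the equation $\sum_{i\in N_p}\min(b_i, \rho\mu_i(p)) = c(p)$ is solved with all minima attaining the second term." One subtlety: a supporter $i$ might have $b_i = c(p)/|N_p|$ exactly, or the minima could resolve the second way at a smaller $\rho$ than the "equal split" heuristic suggests, since $\mu_i(p)$ need not equal $\mu_j(p)$ across voters — but by cost-neutrality all supporters of a \emph{fixed} project $p$ have the same $\mu_i(p) = \mu(p)$, so the equal-split intuition is in fact exact, and "budget-limited" ($b_i < c(p)/|N_p|$) is precisely the condition under which $i$'s min truncates. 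I would also need to confirm that affordability of $p_j$ for \algmu{MES} at stage $j$ follows from $P_{j-1}$ being the common bundle and $p_j$ being affordable for \algmu{Greedy} at its stage $j$ — which holds since the two bundles coincide by the induction hypothesis. With these pieces in place the induction closes and $p'_k = p_k$.
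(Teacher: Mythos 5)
Your proposal is correct and takes essentially the same route as the paper's proof: an induction on the stage, driven by the observation that $\rho(p)=1/v_\mu(p)$ for any affordable project with no budget-limited supporter (all supporters share $\mu_i(p)=\mu(p)$, so the equal split is exact) and $\rho(p)\geq 1/v_\mu(p)$ otherwise, so that minimizing $\rho$ coincides with \algmu{Greedy}'s value maximization. You are in fact slightly more explicit than the paper about tie-breaking and about why truncation by the $\min$ can only increase $\rho$, but these are refinements of the same argument rather than a different one.
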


Using this, we can find the first \algmu{MES} stage during which a voter would be budget-limited for the project chosen by \alg{Greedy} at that stage. We call this stage $i+1$.\footnote{By defining this threshold stage in a more nuanced way than in the proof of Theorem 10 of \citet{BrPe24a}, we fix a minor error in their proof.} Using this stage as a threshold, we will consider three sets of projects and compare the utilitarian welfare generated by these:

\begin{enumerate}[label={(\arabic*)},leftmargin=0.65cm]
    \item the commonly chosen projects until stage $i$; \label{subsetone}
    \item the projects chosen by \algmu{Greedy} after stage $i$; and \label{subsettwo}
    \item the projects chosen by \GMmu{} after stage $i$. \label{subsetthree}
\end{enumerate}
Together, \ref{subsetone} and \ref{subsettwo} form $P_G$ and \ref{subsetone} and \ref{subsetthree} form $P_M$. 

In order to help bound the utilitarian welfare of set \ref{subsetthree}, we can find a lower bound for the value of any project selected by \algmu{MES}. Note that this does not restrict the value of projects selected by any completion rule we use to supplement the outcome of \algmu{MES}\,---\,we will consider those separately.

\begin{restatable}{lemma}{minvalue}\label{lemma:minvalue}
    Any project $p$ chosen by \algmu{MES} has a value of at least $v_\mu(p) \geq \frac{n}{b}\mu_{\min}$.
\end{restatable}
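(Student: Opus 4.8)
The plan is to analyze the moment at which \algmu{MES} selects a project $p$ and extract a lower bound on its value $v_\mu(p) = \frac{|N_p|\mu(p)}{c(p)}$ from the defining equation of the price $\rho(p)$. Recall that when \algmu{MES} selects $p$, we have $\sum_{i\in N_p}\min(b_i,\rho(p)\mu_i(p)) = c(p)$, where the $b_i$ are the voter budgets at that stage. The key observation is that each term in this sum is at most $\rho(p)\mu_i(p) = \rho(p)\mu(p)$ (using that $\mu_i(p)=\mu(p)$ for $i\in N_p$, since $i$ approves $p$). Hence $c(p) \leq |N_p|\cdot \rho(p)\mu(p)$, which rearranges to $\rho(p) \geq \frac{c(p)}{|N_p|\mu(p)} = \frac{1}{v_\mu(p)}$, i.e.\ $v_\mu(p)\geq \frac{1}{\rho(p)}$.

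So it suffices to upper-bound $\rho(p)$ by $\frac{b}{n\mu_{\min}}$. For this I would use the fact that budgets only decrease over the course of \algmu{MES}, so at any stage $b_i \leq \frac{b}{n}$ for every $i$. Consider the candidate price value $\rho^* = \frac{b}{n\mu_{\min}}$: for each supporter $i\in N_p$ we have $\rho^*\mu_i(p) = \frac{b}{n\mu_{\min}}\cdot \mu(p) \geq \frac{b}{n}\geq b_i$, so $\min(b_i,\rho^*\mu_i(p)) = b_i$. Therefore $\sum_{i\in N_p}\min(b_i,\rho^*\mu_i(p)) = \sum_{i\in N_p} b_i$. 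Now I claim this sum is at least $c(p)$: since $p$ was actually purchased by \algmu{MES} at this stage, $p$ must be affordable, but more to the point the affordability/purchasability of $p$ (the price $\rho(p)$ being finite) already requires $\sum_{i\in N_p}b_i \geq c(p)$ — a project can only be bought if its supporters collectively hold at least its cost. Given $\sum_{i\in N_p}\min(b_i,\rho^*\mu_i(p)) \geq c(p)$ and the left side being weakly increasing in the price argument, the actual price $\rho(p)$ satisfying the equality with value exactly $c(p)$ must be at most $\rho^*$. Combining with $v_\mu(p)\geq \frac{1}{\rho(p)}\geq \frac{1}{\rho^*} = \frac{n\mu_{\min}}{b}$ gives the claim.

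The one point requiring a little care — and the place I'd expect to spend the most words — is the monotonicity argument linking $\sum_i \min(b_i, \rho^*\mu_i(p))\geq c(p)$ to $\rho(p)\leq \rho^*$. The function $\rho \mapsto \sum_{i\in N_p}\min(b_i,\rho\mu_i(p))$ is continuous, nondecreasing, starts at $0$ when $\rho=0$, and is bounded above by $\sum_{i\in N_p}b_i$; the price $\rho(p)$ is by definition the value at which it first equals $c(p)$ (and is $\infty$ if it never reaches $c(p)$). Since at $\rho=\rho^*$ the function already attains its supremum $\sum_{i\in N_p}b_i\geq c(p)$, the threshold $\rho(p)$ is finite and $\leq \rho^*$. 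I would also note explicitly that the case $\sum_{i\in N_p}b_i = c(p)$ with strict inequalities elsewhere is handled the same way. No completion-rule projects enter the argument, consistent with the remark preceding the lemma.
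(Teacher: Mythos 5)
Your proof is correct, and the load-bearing fact is the same one the paper uses: since budgets only decrease from their initial value $\frac{b}{n}$ and a project can only be bought if its supporters collectively cover its cost, $c(p)\le\sum_{i\in N_p}b_i\le|N_p|\frac{b}{n}$. The paper stops there and concludes in one line: $v_\mu(p)=\frac{|N_p|\mu(p)}{c(p)}\ge\frac{n}{b}\mu(p)\ge\frac{n}{b}\mu_{\min}$. Your detour through the price --- establishing $v_\mu(p)\ge\frac{1}{\rho(p)}$ and separately $\rho(p)\le\frac{b}{n\mu_{\min}}$ via monotonicity of $\rho\mapsto\sum_{i\in N_p}\min(b_i,\rho\mu_i(p))$ --- is logically sound (and you correctly flag the only delicate point, namely that the solution of the defining equation should be taken as the smallest one when the supporters' budgets exactly equal $c(p)$), but it is machinery you do not need: the moment you have written $\sum_{i\in N_p}\min(b_i,\rho^*\mu_i(p))=\sum_{i\in N_p}b_i\ge c(p)$, the bound follows by direct computation without any reference to $\rho$ at all. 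The one thing your version buys is a slightly more self-contained justification of \emph{why} $\sum_{i\in N_p}b_i\ge c(p)$ (finiteness of $\rho(p)$ forces it), which the paper asserts in words; the cost is that the $\mu_{\min}$ appearing in $\rho^*$ makes the argument look like it depends on the satisfaction function in a way that the direct computation reveals it does not.
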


Moreover, we use the DNS assumption on $\mu$ to bound the ratio of satisfaction to cost in our instance.

\begin{restatable}{lemma}{dns}\label{lemma:dns}
    Let $\mu$ be a DNS satisfaction function and $I$ be a PB instance with project set $P$. Then, for any project $p\in P$, it holds that
    $\frac{\mu_{\min}(I)}{c_{\min}(I)}\geq \frac{\mu(p)}{c(p)} \geq \frac{\mu_{\max}(I)}{c_{\max}(I)}.$
\end{restatable}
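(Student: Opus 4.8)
The plan is to apply the two defining conditions of a DNS function directly, after first identifying the cheapest and most expensive projects in the instance with the projects attaining $\mu_{\min}(I)$ and $\mu_{\max}(I)$, respectively.

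First I would fix a project $p_{\min}\in P$ with $c(p_{\min})=c_{\min}(I)$ and a project $p_{\max}\in P$ with $c(p_{\max})=c_{\max}(I)$. Using DNS condition \ref{dns1}: since $c(p_{\min})\le c(p')$ for every $p'\in P$, we get $\mu(p_{\min})\le\mu(p')$ for all such $p'$, hence $\mu(p_{\min})=\mu_{\min}(I)$; symmetrically, $c(p')\le c(p_{\max})$ for all $p'$ yields $\mu(p_{\max})=\mu_{\max}(I)$. This identification is exactly what lets us rewrite $\frac{\mu_{\min}(I)}{c_{\min}(I)}$ as $\frac{\mu(p_{\min})}{c(p_{\min})}$ and $\frac{\mu_{\max}(I)}{c_{\max}(I)}$ as $\frac{\mu(p_{\max})}{c(p_{\max})}$.

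With that in hand, both inequalities are a single application of DNS condition \ref{dns2}. For an arbitrary $p\in P$: since $c(p_{\min})\le c(p)$, condition \ref{dns2} gives $\frac{\mu(p_{\min})}{c(p_{\min})}\ge\frac{\mu(p)}{c(p)}$, i.e., $\frac{\mu_{\min}(I)}{c_{\min}(I)}\ge\frac{\mu(p)}{c(p)}$; and since $c(p)\le c(p_{\max})$, condition \ref{dns2} gives $\frac{\mu(p)}{c(p)}\ge\frac{\mu(p_{\max})}{c(p_{\max})}=\frac{\mu_{\max}(I)}{c_{\max}(I)}$. Chaining these proves the claim.

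There is no substantive obstacle; the only point worth stating carefully is the first step, namely that condition \ref{dns1} (together with cost-neutrality of DNS functions) forces the cost-minimal project to also be satisfaction-minimal and the cost-maximal project to be satisfaction-maximal. Everything after that is just condition \ref{dns2} read off for the pairs $(p_{\min},p)$ and $(p,p_{\max})$.
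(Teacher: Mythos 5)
Your proof is correct and follows essentially the same route as the paper's: identify the cost-minimal and cost-maximal projects as the satisfaction-minimal and satisfaction-maximal ones via DNS condition \ref{dns1}, then apply condition \ref{dns2} to the pairs $(p_{\min},p)$ and $(p,p_{\max})$. Your explicit justification of the first identification step is if anything slightly more careful than the paper's one-line version (and condition \ref{dns1} alone suffices for it; the appeal to cost-neutrality is not needed).
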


We are now ready to prove \Cref{thm:mesovergreedy}.
\newcommand{\GMmunew}{$\overline{\mbox{\algmu{MES}}}$}
\begin{proof}[Proof of \Cref{thm:mesovergreedy}] 

We compare the projects chosen by \algmu{Greedy} to the projects chosen by \GMmu{}. 
Let 
\[P_G=\{p_1,\dots,p_g\} \quad  \text{and} \quad P_M=\{p'_1,\dots,p'_m\}\] 
be the sets of projects chosen by \algmu{Greedy} and \GMmu{}, respectively, indexed in the order they were chosen by those methods. If $P_G=P_M$, then $\frac{uw(P_M)}{uw(P_G)}=1$ and we are done, so we will assume that $P_G\neq P_M$. In particular, this 
means that $p'_1$ was chosen by \algmu{MES} rather than by its completion. 

Using \Cref{lemma:sameprojects}, we know that as long as no voters are budget-limited for the affordable, unchosen project with highest value, the \algmu{MES} part of \GMmu{} selects the same projects as \algmu{Greedy}. As $P_G\neq P_M$ we can let $p_{i+1}\in P_G$ be the first project chosen by \algmu{Greedy} where a voter would be budget limited for it after $\{p_1,\dots,p_i\}$ were selected by \algmu{MES}. Note that $i=0$ is possible, i.e., $p_{i+1}$ could be the very first project chosen. For notational convenience, let $v^*=v(p_{i+1})$ and $\mu^*=\mu(p_{i+1})$.
We observe that $p_j=p'_j$ for $1\leq j\leq i$ and let the total cost of those projects be $\val=\sum_{j=1}^i c(p_j)$. %

In order to compare the utilitarian welfare generated by \algmu{Greedy} and \GMmu{}, we will consider the welfare of the following subsets of projects: 
\eqref{projects1} The set $\{p_1,\dots,p_i\}$ of projects that are initially chosen by both \algmu{Greedy} and \GMmu{}; 
\eqref{projects2} the set $\{p_{i+1},\dots,p_g\}$ of projects chosen by \algmu{Greedy} after $p_i$; and 
\eqref{projects3} the set $\{p'_{i+1},\dots,p'_m\}$ of projects chosen by \GMmu{} after $p_i$.

We will prove the following bounds on the utilitarian welfare provided by these three projects sets:
\begin{align}
    &uw(\{p_1,\dots,p_i\}) \geq \val v^* \label{projects1} \\[0.4em]
    &uw(\{p_{i+1},\dots,p_g\})\leq(b-\val) v^* \label{projects2} \\
    &uw(\{p'_{i+1},\dots,p'_m\})\geq [b-\val-c_{\max}]\frac{n}{b}\mu_{\min} \label{projects3}
\end{align}

From our definition of value, we know that for any set of projects $P'\subseteq P$, we have $uw(P')=\sum_{p\in P'} c(p)v(p)$. 

For \eqref{projects1}, we observe that, by definition of \algmu{Greedy}, it holds that $v(p)\geq v^*$ for any $p\in \{p_1,\dots, p_i\}$ as any such project was picked before $p_i$. Hence, get $uw(\{p_1,\dots, p_i\}) = \sum_{j = 1}^i c(p_j) v(p_j) \ge \sum_{j = 1}^i c(p_j) v^* = \val v^*$.

For \eqref{projects2}, we observe that any project $p\in \{p_{i+1},\dots, p_g\}$ must have a weakly lower value than $p_{i+1}$. Since, moreover,  $c(\{p_{i+1},\dots, p_g\})$ is upper bounded by $b-\val$, we get that 
$uw(\{p_{i+1},\dots, p_n\}) = \sum_{j = i+1}^n c(p_j) v(p_j) \le \sum_{j = i+1}^n c(p_j) v^* \le (b - \val) v^*$.

For \eqref{projects3}, we construct a subset $P'_M\subseteq \{p'_{i+1},\dots,p'_m\}$ s.t. %
\addtocounter{equation}{-1}
\begin{subequations}
\begin{align}
    &c(P'_M)\geq b-\val-c_{\max} \label{subsetcost} \text{, and}\\
    &v(p)\geq \frac{n}{b}\mu_{\min} \text{ for each project $p\in P'_M$.} \label{subsetvalue}
\end{align}
\end{subequations}
Together, \eqref{subsetcost} and \eqref{subsetvalue} imply 
$uw(\{p'_{i+1},\dots,p'_m\})\geq uw(P'_M)\geq [b-\val-c_{\max}]\frac{n}{b}\mu_{\min}$.

The completion step of \GMmu{} considers projects that were not already chosen by \algmu{MES}, in order of decreasing value, and selects them as long as they are affordable. Let $p^*$ be the first unaffordable project that the completion step considers, and call the project chosen by \GMmu{} immediately before this point $p'_j$. Clearly $c(\{p'_1,\dots, p'_j\})+c(p^*)>b$, else $p^*$ would have been affordable. We define $P'_M=\{p'_{i+1},\dots,p'_j\}$, noting that $P'_M=\emptyset$ in the case  $j=i$.
Then, $c(P'_M)>b-\val-c(p^*)\geq b-\val-c_{\max}$, showing \eqref{subsetcost}. 

For \eqref{subsetvalue}, we consider $v(p)$ for any $p\in P'_M$. We have two cases: either $p$ was chosen by \algmu{MES} or the greedy completion. If $p$ was chosen by \algmu{MES} then we know from \Cref{lemma:minvalue} that $v(p)\geq\frac{n}{b} \mu_{\min}$. 

Let $P'\subseteq P$ be the set of projects with $v(p)\geq \frac{n}{b}\mu_{\min}$. 
Suppose that $p$ was chosen by the greedy completion, and, for a contradiction, assume that $v(p^*)<\frac{n}{b}\mu_{\min}$, which means that \GMmu{} selected every project $p$ with $v(p)\geq \frac{n}{b}\mu_{\min}$.  
Thus, $P'\subseteq \{p'_1,\dots,p'_j\}\subseteq P_M$. Furthermore, every project in $P_M\setminus P'$ must have been chosen by the greedy completion by \Cref{lemma:minvalue}.
As $c(P')\leq c(P_M)\leq b$, the outcome of \algmu{Greedy}, $P_G$, must also contain $P'$, as \algmu{Greedy} selects projects in decreasing order of value. Thus $P_G=P_M$, which contradicts our earlier assumption that $P_G\neq P_M$, and we can conclude that $v(p)\geq v(p^*)\geq \frac{n}{b}\mu_{\min}$, showing \eqref{subsetvalue}.

In order to compute the tradeoff between \algmu{Greedy} and \GMmu{} using \eqref{projects1}, \eqref{projects2}, and \eqref{projects3}, we determine $v^*$, i.e., the value of $p_{i+1}$. To do so, we first define the parameter
$\alpha = \frac{|N_{p_{i+1}}|b}{n c(p_{i+1})},$
representing how many times over the supporters of $p_{i+1}$ could buy $p_{i+1}$ with the budgets they are provided at the start of the execution of \algmu{MES}. %
Using the definition of $\alpha$ and $v^*$, we can rewrite the value of $p_{i+1}$ as 
$v^*=\frac{|N_{p_{i+1}}|{\mu^*}}{c(p_{i+1})}=\alpha {\mu^*}\frac{n}{b}.$

We distinguish three cases, based on the value of $\alpha$. In each of the three cases, we show that the 
bound $\frac{uw(P_M)}{uw(P_G)}\geq 2\sqrt{\frac{c_{\min}}{b}}-\frac{c_{\min} + c_{\max}}{b}$ holds. 

\noindent\textbf{Case 1:} $\alpha<\frac{\mu_{\min}}{{\mu^*}} \le 1$. We can find that every affordable project chosen by \GMmu{} after $p_i$ was chosen by the \algmu{Greedy} completion and thus $P_G=P_M$.\newline
\textbf{Case 2:} $\alpha\geq 1$. Using the fact that some voter was budget-limited for $p_{i+1}$ in step $i+1$ of \algmu{MES} we can bound the cost of $\{p_1,\dots,p_i\}$ with $\val\geq (\alpha-1)c_{\min}\frac{ {\mu^*}}{\mu_{\min}}$. We then combine this with our earlier results, and our assumption that $\mu$ is a DNS function, to derive the bound above.\newline
\textbf{Case 3:} $\frac{\mu_{\min}}{{\mu^*}}\leq \alpha< 1$. For this case we again combine our earlier results with the DNS assumptions.
\end{proof}

\section{The Utilitarian Guarantee of MES}\label{sec:mes_guarantee}

In this section, we combine a knapsack-inspired utilitarian guarantee for \alg{Greedy} with a modified version of \Cref{thm:mesovergreedy} to derive a utilitarian guarantee for \alg{MES}.

Using existing results from the knapsack literature (e.g., \citealp{KPP04a}),
we can derive the following utilitarian guarantee for the greedy rule. 

\begin{restatable}{proposition}{greedyguarantee}\label{prop:greedyguarantee}
Let $\mu$ be a (possibly non-DNS) satisfaction function. Then \algmu{Greedy} has a utilitarian guarantee of \mbox{$\frac{b-c_{\max}}{b}$}, and this guarantee is asymptotically tight.
\end{restatable}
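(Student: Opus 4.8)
The plan is to mimic the standard analysis of the greedy algorithm for KNAPSACK (as in \citealp{KPP04a}), adapted to our setting where ``items'' are projects with profit $uw_\mu(p)$ and weight $c(p)$, and the fractional relaxation plays the role of an upper bound on $uw_\mu(P^*)$. First I would order the projects processed by \algmu{Greedy} by decreasing value $v_\mu(p) = uw_\mu(p)/c(p)$, say $q_1, q_2, \dots$, and let $q_t$ be the first project in this order that \algmu{Greedy} \emph{rejects} because it does not fit in the remaining budget (if no such project exists, \algmu{Greedy} selects every approved project and its welfare equals $uw_\mu(P^*)$, so the ratio is $1 \ge \frac{b-c_{\max}}{b}$ and we are done). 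Write $P_G' = \{q_1,\dots,q_{t-1}\}$ for the prefix selected before this first rejection; note $P_G' \subseteq P_G$, so $uw_\mu(P_G) \ge uw_\mu(P_G')$.

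The core inequality is the classical fractional-knapsack bound: since the $q_j$ are sorted by decreasing bang-per-buck, the fractional optimum obtained by packing $q_1,\dots,q_{t-1}$ entirely and then a $c(q_t)$-fraction (less, actually) of $q_t$ dominates any feasible integral solution, hence $uw_\mu(P^*) \le uw_\mu(P_G') + v_\mu(q_t)\cdot\bigl(b - c(P_G')\bigr) \le uw_\mu(P_G') + uw_\mu(q_t)$, where the last step uses $b - c(P_G') < c(q_t)$ (the reason $q_t$ was rejected) together with $uw_\mu(q_t) = v_\mu(q_t) c(q_t)$. Therefore $uw_\mu(P_G) \ge uw_\mu(P^*) - uw_\mu(q_t)$. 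Now I bound $uw_\mu(q_t)$: since \algmu{Greedy} selected $P_G' \ne \emptyset$ in decreasing value order with $v_\mu(q_j) \ge v_\mu(q_t)$ for $j < t$, and $q_t$ had enough room at \emph{some} earlier point to be considered but is now blocked, a short case analysis gives $uw_\mu(q_t) \le \frac{c_{\max}}{b}\, uw_\mu(P^*)$; the cleanest route is: either $uw_\mu(q_t) \le uw_\mu(P_G')$, giving $uw_\mu(P^*) \le 2\,uw_\mu(P_G')$ — too weak on its own — so instead one argues directly that $v_\mu(q_t) c(q_t) \le v_\mu(q_t)\cdot c_{\max}$ and $uw_\mu(P^*) \ge v_\mu(q_t)\cdot b$ is \emph{false} in general, so the right move is the textbook one: $uw_\mu(P_G) \ge \max\{uw_\mu(P_G'),\, uw_\mu(q_t)\}$ cannot be used, and instead we use $uw_\mu(P_G') \ge v_\mu(q_t) c(P_G') \ge v_\mu(q_t)(b - c_{\max})$ because after rejecting $q_t$ the spent budget $c(P_G')$ exceeds $b - c(q_t) \ge b - c_{\max}$. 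Combining $uw_\mu(P^*) \le uw_\mu(P_G') + v_\mu(q_t) c_{\max}$ with $uw_\mu(P_G') \ge v_\mu(q_t)(b-c_{\max})$ yields $uw_\mu(P_G) \ge uw_\mu(P_G') \ge \frac{b-c_{\max}}{b}\bigl(uw_\mu(P_G') + v_\mu(q_t) c_{\max}\bigr) \ge \frac{b-c_{\max}}{b}\, uw_\mu(P^*)$, which is exactly the claimed guarantee.

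For asymptotic tightness I would exhibit, for every $\epsilon > 0$, an instance where the ratio is $\frac{b - c_{\max}}{b} + O(\epsilon)$: take a single voter approving two projects, a ``cheap decoy'' $p_1$ with cost $c_{\max} - \delta$ and slightly higher value, and a ``jackpot'' $p_2$ with cost $c_{\max}$ and value just below that of $p_1$, calibrated so that \algmu{Greedy} grabs $p_1$ first and then $p_2$ no longer fits, leaving behind welfare proportional to $c_{\max}/b$ of the optimum; letting $\delta \to 0$ drives the ratio to $\frac{b-c_{\max}}{b}$. One has to be a little careful to make this a valid PB instance with $c(p) \le b$ for all $p$ and with $c_{\max}$ actually attained, but this is routine; the multiwinner special case $c_{\max} = b/k$ recovers the known $\frac{k-1}{k}$ tightness for greedy, which is a useful sanity check.

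The main obstacle is getting the bound on $uw_\mu(q_t)$ right so that the two inequalities chain into exactly $\frac{b-c_{\max}}{b}$ rather than something weaker like $\frac{b - 2c_{\max}}{b}$ or $\frac{1}{2}$. The trick is to avoid the naive ``$\mathrm{OPT} \le 2\max\{\ldots\}$'' bound entirely and instead note that once $q_t$ is rejected, \algmu{Greedy} has already committed more than $b - c_{\max}$ units of budget to projects each of value $\ge v_\mu(q_t)$, so the selected welfare alone is at least $v_\mu(q_t)(b-c_{\max})$, while the fractional bound caps $uw_\mu(P^*)$ at that same selected welfare plus at most $v_\mu(q_t)c_{\max}$ more — the ratio of these two quantities is visibly $\ge \frac{b-c_{\max}}{b}$, independent of the satisfaction function, which is why \Cref{prop:greedyguarantee} needs no DNS assumption.
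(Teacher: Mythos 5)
Your lower-bound argument is correct and is essentially the paper's proof: both identify the prefix $P'_G$ of projects selected before the first rejected project $q_t$ (the paper's $p^*$), apply the fractional-knapsack bound $uw_\mu(P^*)\le uw_\mu(P'_G)+v_\mu(q_t)\bigl(b-c(P'_G)\bigr)$, and close the chain using $b-c(P'_G)\le c(q_t)\le c_{\max}$ together with $uw_\mu(P'_G)\ge v_\mu(q_t)\,c(P'_G)$. The exposition meanders (you propose and discard two routes before landing on the right one), but the final chain of inequalities checks out and needs no DNS assumption, exactly as in the paper.

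The tightness construction, however, does not work, and this is a genuine gap since tightness is part of the claim. With a single voter and two projects, if \algmu{Greedy} prefers $p_1$ to $p_2$ then $v_\mu(p_1)\ge v_\mu(p_2)$, hence $\frac{uw_\mu(p_1)}{uw_\mu(p_2)}\ge\frac{c(p_1)}{c(p_2)}$; with $c(p_1)=c_{\max}-\delta$ and $c(p_2)=c_{\max}$ this forces a utilitarian ratio of at least $\frac{c_{\max}-\delta}{c_{\max}}\to 1$ as $\delta\to 0$, nowhere near $\frac{b-c_{\max}}{b}$. More generally, no two-project instance can be tight: blocking requires $c(p_1)>b-c_{\max}$, which already yields a ratio above $\frac{b-c_{\max}}{c_{\max}}\ge\frac{b-c_{\max}}{b}$, with the gap closing only in the degenerate case $c_{\max}=b$ where the guarantee is $0$. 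Witnessing the bound requires the \emph{optimum} to consist of many projects so that it attains welfare $\approx v\cdot b$ while \algmu{Greedy} attains only $\approx v\cdot(b-c_{\max})$ at essentially the same value density: the paper takes $x-1$ projects of cost $(1+\varepsilon)\frac{b}{x}$ approved by all $n$ voters against $x$ projects of cost $\frac{b}{x}$ approved by $n-1$ voters, so that \algmu{Greedy} buys the former (spending $\approx b-c_{\max}$ and leaving no room for anything else) while \algmu{MaxSat} exhausts the budget with the latter; letting $\varepsilon\to 0$ and $n\to\infty$ gives the bound. Your ``sanity check'' is also off: in multiwinner instances \algmu{Greedy} coincides with \algmu{MaxSat}, so there is no $\frac{k-1}{k}$ tightness to recover --- the paper's construction crucially uses costs that are arbitrarily close but not equal.
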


One simple way to derive a utilitarian guarantee for \algmu{MES} would be to directly combine the results of \Cref{thm:mesovergreedy} and \Cref{prop:greedyguarantee}, obtaining a guarantee of $(2\sqrt{\frac{c_{\min}}{b}}-\frac{c_{min}+c_{\max}}{b})(\frac{b-c_{\max}}{b})$. However, in the PB setting, different rules do not necessarily spend the same proportion of the budget (or spend it efficiently). By simply multiplying our two bounds, we would be double-counting this inefficiency. By accounting for this more carefully, we can derive a utilitarian guarantee for \GMmu{} that coincides with the comparative guarantee from \Cref{thm:mesovergreedy}.

\begin{restatable}{theorem}{combinedguarantee}\label{thm:combined_guarantee}
    Let $\mu$ be a DNS satisfaction function. Then, \GMmu{} has a utilitarian guarantee of
$2\sqrt{\frac{c_{\min}}{b}}-\frac{c_{\min}+c_{\max}}{b}.$
\end{restatable}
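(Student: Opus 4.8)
\textbf{Proof proposal for \Cref{thm:combined_guarantee}.}

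The plan is to compare $uw(P_M)$ to $uw(P^*)$ directly, rather than routing entirely through $uw(P_G)$ and multiplying bounds. The key observation (flagged in the paragraph before the statement) is that the naive product $\bigl(2\sqrt{c_{\min}/b}-\frac{c_{\min}+c_{\max}}{b}\bigr)\bigl(\frac{b-c_{\max}}{b}\bigr)$ double-counts budget inefficiency: the factor $\frac{b-c_{\max}}{b}$ in \Cref{prop:greedyguarantee} accounts for \algmu{Greedy} possibly failing to spend a $\frac{c_{\max}}{b}$ fraction of the budget, but the analysis of \Cref{thm:mesovergreedy} already pays for \GMmu{}'s own shortfall against \algmu{Greedy}'s spending. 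So I would re-run the three-set decomposition from the proof of \Cref{thm:mesovergreedy} — commonly chosen projects up to stage $i$, then \algmu{Greedy}'s tail, then \GMmu{}'s tail — but now bound things against the optimum $P^*$ instead of against $P_G$.

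Concretely, first I would set up the same threshold stage $i+1$, the same quantities $\val=\sum_{j=1}^i c(p_j)$, $v^*=v(p_{i+1})$, $\mu^*=\mu(p_{i+1})$, and $\alpha=\frac{|N_{p_{i+1}}|b}{n\,c(p_{i+1})}$. The lower bound $uw(\{p_1,\dots,p_i\})\ge \val v^*$ (Eq.~(5)) and the lower bound $uw(\{p'_{i+1},\dots,p'_m\})\ge [b-\val-c_{\max}]\frac{n}{b}\mu_{\min}$ (Eq.~(7)) carry over verbatim, since neither mentions $P_G$; together these give $uw(P_M)\ge \val v^* + [b-\val-c_{\max}]\frac{n}{b}\mu_{\min}$. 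The new ingredient is an \emph{upper} bound on $uw(P^*)$. Since every project satisfies $v(p)\le v_\mu^{\max}$ and $P^*$ is feasible with $c(P^*)\le b$, one gets $uw(P^*)=\sum_{p\in P^*} c(p)v(p)\le b\cdot v_\mu^{\max}$; and because $\mu$ is DNS, \Cref{lemma:dns} gives $v_\mu^{\max}$ in terms of $\frac{\mu_{\max}}{c_{\max}}$. But a cruder bound suffices and is cleaner: every affordable, unchosen project at stage $i+1$ has value $\le v^*$ by the greedy ordering — but $P^*$'s projects need not be affordable at that stage, so instead I would argue as in the greedy knapsack analysis that $uw(P^*)\le \val v^* + (b-\val)v^* = b\,v^*$ would only hold if all of $P^*$ had value $\le v^*$, which is false in general. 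The correct route: use $uw(P^*) \le uw(P_G) + (\text{the one project } p^* \text{ that greedy couldn't fit})$, i.e. the standard fractional-knapsack bound $uw(P^*)\le uw(P_G) + c_{\max}v^*$, since the first project greedy rejects has value $\le v^*$ and costs $\le c_{\max}$. Combined with the Eq.~(5)–(6) bounds $uw(P_G)\le \val v^* + (b-\val)v^* = b v^*$, this yields $uw(P^*)\le (b+c_{\max})v^* \le \ldots$; more carefully I would track $uw(P^*)\le \val v^* + (b-\val)v^* $ is already the clean statement once one notes greedy's tail plus the rejected project covers $P^*$'s value outside $\{p_1,\dots,p_i\}$ (each such project has value $\le v^*$ and total cost $\le b-\val+c_{\max}$). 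So $uw(P^*)\le \val v^* + (b-\val+c_{\max})v^*$.

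Then the ratio $\frac{uw(P_M)}{uw(P^*)}$ reduces to exactly the same optimization in $\alpha$ and the ratio $\mu^*/\mu_{\min}$ that appears in the proof of \Cref{thm:mesovergreedy}, with the denominator now carrying an extra additive $c_{\max}v^*$ that is absorbed into the $-\frac{c_{\max}}{b}$ term of the target bound. I would then split into the same three cases on $\alpha$ ($\alpha<\mu_{\min}/\mu^*$, $\alpha\ge 1$, and $\mu_{\min}/\mu^*\le\alpha<1$): in Case~1 one shows \GMmu{} already grabbed everything greedy would, so the greedy guarantee of \Cref{prop:greedyguarantee} applies and dominates the target; in Cases~2 and~3 one uses the budget-limited inequality $\val \ge (\alpha-1)c_{\min}\frac{\mu^*}{\mu_{\min}}$ together with the DNS bounds $\mu^*\ge\mu_{\min}$, $\frac{\mu^*}{c(p_{i+1})}\le\frac{\mu_{\min}}{c_{\min}}$ to lower-bound the ratio, and the minimizing choice of parameters gives the AM–GM style $2\sqrt{c_{\min}/b}$ term. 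The main obstacle I anticipate is getting the denominator bound on $uw(P^*)$ tight enough that the extra $c_{\max}$ slack lands exactly as $-\frac{c_{\max}}{b}$ and not worse — i.e., verifying that the more careful accounting really does avoid the double-counting and recovers the \emph{same} expression as \Cref{thm:mesovergreedy} rather than something strictly smaller; this requires showing that the worst case for the $P_M$-vs-$P^*$ ratio occurs when $P_G$ itself is (near-)optimal, so that no independent loss is incurred on the greedy side.
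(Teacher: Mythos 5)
Your diagnosis of the double-counting problem is right, and the numerator side of your plan is sound: the lower bounds $uw(\{p_1,\dots,p_i\})\ge \val v^*$ and $uw(\{p'_{i+1},\dots,p'_m\})\ge (b-\val-c_{\max})\frac{n}{b}\mu_{\min}$ do carry over, and your Case~1 reduction to \Cref{prop:greedyguarantee} plus AM--GM works. The gap is the upper bound on $uw(P^*)$. You correctly observe mid-argument that projects of $P^*$ outside $\{p_1,\dots,p_i\}$ need not have value at most $v^*$ (being skipped by \algmu{Greedy} constrains their \emph{cost} relative to the remaining budget, not their value), but your final bound $uw(P^*)\le \val v^*+(b-\val+c_{\max})v^*$ then rests on exactly the assertion you refuted. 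It fails concretely: after a cheap high-value common prefix, an expensive project with value far above $v^*$ can be unaffordable for \algmu{Greedy} yet belong to $P^*$, contributing welfare well beyond $c_{\max}v^*$. (The companion claim $c(P^*\setminus\{p_1,\dots,p_i\})\le b-\val+c_{\max}$ is also unjustified, as $P^*$ may be disjoint from $\{p_1,\dots,p_i\}$.) Moreover, even granting $uw(P^*)\le(b+c_{\max})v^*$, the ratio would come out as $\frac{b}{b+c_{\max}}$ times the target, i.e., strictly weaker; to recover the stated bound the denominator must effectively be $b\,v^*$, not $(b+c_{\max})v^*$.

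The paper avoids bounding $uw(P^*)$ by $v^*$ altogether. It defines the \emph{truncated fractional greedy} outcome $\hat{P_G}$: the projects bought with the first $b-c_{\max}$ dollars that \algmu{Greedy} spends, taking a fractional piece of the project straddling that threshold. Two separate bounds are then multiplied. First, rerunning the \Cref{thm:mesovergreedy} analysis with the greedy tail capped at cost $b-\val-c_{\max}$ (rather than $b-\val$) gives $\frac{uw(P_M)}{uw(\hat{P_G})}\ge\bigl(2\sqrt{c_{\min}/b}-\frac{c_{\min}+c_{\max}}{b}\bigr)\frac{b}{b-c_{\max}}$. Second, the fractional-knapsack argument behind \Cref{prop:greedyguarantee} applies verbatim to $\hat{P_G}$ and gives $\frac{uw(\hat{P_G})}{uw(P^*)}\ge\frac{b-c_{\max}}{b}$, because the marginal value of fractional greedy spending is non-increasing, so the first $b-c_{\max}$ dollars capture at least a $\frac{b-c_{\max}}{b}$ fraction of the fractional optimum. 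The correction factors $\frac{b}{b-c_{\max}}$ and $\frac{b-c_{\max}}{b}$ cancel exactly, which is the precise sense in which the double-counting is avoided; your direct comparison would need to reproduce this cancellation, and an upper bound on $uw(P^*)$ expressed purely through $v^*$ cannot do it.
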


\begin{proof}[Proof sketch]
    We first compare the output of \GMmu{} to a truncated output of \algmu{Greedy}, considering the subset of projects purchased by \algmu{Greedy} using the first $b-c_{\max}$ units of the budget that it spends. We lower bound the ratio of the utilitarian welfare of these two outcomes by $(2\sqrt{\frac{c_{\min}}{b}}-\frac{c_{\max}+c_{\min}}{b})\frac{b}{b-c_{\max}}$. The extra $\frac{b}{b-c_{\max}}$ factor reflects the decrease in total cost of the truncated greedy outcome compared to the original outcome of \algmu{Greedy}.
    We then compare the truncated \algmu{Greedy} outcome to the output of \algmu{MaxSat}, and find that this outcome has the same utilitarian ratio as guaranteed by \Cref{prop:greedyguarantee}: $\frac{b-c_{\max}}{b}$. Combining these bounds, we obtain a guarantee of $2\sqrt{\frac{c_{\min}}{b}}-\frac{c_{\min}+c_{\max}}{b}$.
\end{proof}

The result of \Cref{thm:combined_guarantee}, like all of the results in this paper, can be written in terms of just two instance parameters --- the minimum and maximum committee size of the instance: $2\sqrt{\frac{c_{\min}}{b}}-\frac{c_{\min}+c_{\max}}{b}=2\sqrt{\frac{1}{k_2}}-\frac{1}{k_2}-\frac{1}{k_1}$.

In order to give more context to our utilitarian guarantees, we consider the following simple example.

\begin{example}\label{ex:inpractice}

Let $\mu$ be any DNS function. Consider a PB instance class $\mathcal{I}^*\subseteq \mathcal{I}$ with $b(I)=\$1,000,000$, $c_{\min}(I)=\$10,000$ and $c_{\max}(I)= \$30,000$.
For this class, \algmu{MES} has a utilitarian guarantee of $2\sqrt{0.01}-0.01-0.03=0.16$. $\mathcal{I}^*$ is "similar" to a class of multiwinner elections with $k=\frac{b}{c_{\min}}=100$. (We could think of $\mathcal{I}^*$ as allowing some candidates to be bigger, taking up any\,---\,potentially non-integral\,---\,number of seats from $1$ to $3$.) For this election, the utilitarian guarantee of \alg{MES} is $0.18$ and that of \alg{GJCR} (another proportional rule) is $0.19$, using the bounds of \citet{BrPe24a}. 

\end{example}

Meanwhile, consider %
an \alg{MES} utilitarian guarantee that is a function of $n$, the number of voters. We know that for any DNS function $\mu$, this guarantee is bounded from above by $\frac{1}{n-1}$ (\Cref{prop:worstcase_bounded,prop:worstcase_limiting}). For instances with $n\!\ge\!1000$, this provides a utilitarian guarantee of at most~$0.001$.

\newcommand{\floorsqrt}[1]{\lfloor\sqrt{#1}\rfloor}

\smallskip

In the multiwinner setting, the \alg{MES} utilitarian guarantee is almost tight, with \citet{LaSk20b} deriving an upper bound of $\frac{2}{\floorsqrt{k}}-\frac{1}{k}$ for proportional rules. 
In the PB setting, we show that the \alg{MES} utilitarian guarantee from \Cref{thm:combined_guarantee} is asymptotically tight\,---\,not just for \alg{MES}, but for any rule satisfying EJR1\,---\,, at least for cost satisfaction.

\begin{restatable}{proposition}{tightness}\label{prop:tightness}

Let $\mu=\mu^c$ be the cost satisfaction function and let \algmu{PROP} be a rule satisfying EJR1$_\mu$. For each $k_1,k_2\in \mathbb{N}$ with $k_2\geq k_1$, there exists a PB instance $I$ with $c_{\min}(I)=\frac{b(I)}{k_2}$ and $c_{\max}(I)=\frac{b(I)}{k_1}$ such that \algmu{PROP} has a utilitarian ratio of at most %
$ \frac{2}{\floorsqrt{b/c_{\min}}}-\frac{c_{\min}+xc_{\max}}{b},$
where $x=\floorsqrt{k_2}\frac{c_{\min}}{c_{\max}}-\lfloor\floorsqrt{k_2}\frac{c_{\min}}{c_{\max}}\rfloor$.

\end{restatable}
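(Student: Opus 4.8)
The goal is a matching lower-bound instance showing the guarantee of \Cref{thm:combined_guarantee} is asymptotically tight for \emph{any} EJR1$_{\mu^c}$ rule. Since tightness results in this area (e.g.\ \citealp{LaSk20b, BrPe24a}) proceed by exhibiting a "diagonal" profile that forces a proportional rule to spend its budget on small disjoint cohesive groups while \alg{MaxSat} can concentrate welfare on a single large project, I would build a PB analogue of that construction, adapted to exploit the two cost parameters $c_{\min}=b/k_2$ and $c_{\max}=b/k_1$. The idea is: one "big" group of voters, almost the entire electorate, all approving a single expensive project (or a family of expensive projects of cost $c_{\max}$) that \alg{MaxSat} will buy, contributing welfare $\approx n\cdot b$; against this, a collection of roughly $\sqrt{k_2}$ small, pairwise-disjoint groups, each of size just large enough (around $n/\sqrt{k_2}$, so that $|N'|/n \cdot b$ barely covers the cost of one cheap project) to be $\{p\}$-cohesive for some cheap project $p$ of cost $c_{\min}$ that they alone approve. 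EJR1$_{\mu^c}$ then forces \algmu{PROP} to give each such group essentially its cheap project (up to one project's worth of cost satisfaction), pinning down the outcome's welfare.

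The key steps, in order, are: (1) choose $n$ and the group sizes so the disjoint small groups are exactly cohesive for their cheap projects and together they (almost) exhaust the budget — this is where the floor function $\floorsqrt{k_2}$ and the leftover term $x$ enter, since $\sqrt{k_2}$ copies of $c_{\min}$ need not divide evenly into $b$, and one partially-covered slot of size $c_{\max}$ must absorb the slack, explaining the $\frac{c_{\min}+xc_{\max}}{b}$ correction; (2) verify feasibility of the \alg{MaxSat} outcome (the expensive project(s) fit in $b$) and that it attains welfare $\ge n\cdot(b - c_{\max})$ or so, i.e.\ close to $n b$; (3) argue from EJR1$_{\mu^c}$ that \algmu{PROP} must, for each small cohesive group $N'$, either include its cheap project or leave some voter in $N'$ with $\mu_i(P') + \mu_i(p) > \mu_i(T) = c_{\min}$ — and since each such voter approves only projects of total value at most a small amount, this forces $P'$ to contain (almost) all the cheap projects, capping $uw_{\mu^c}(P')$ at roughly $\sum |N'_\ell| \cdot c_{\min} \approx n \cdot b / \sqrt{k_2}$; (4) take the ratio, simplify using $b/c_{\min} = k_2$, and match it to the claimed expression $\frac{2}{\floorsqrt{b/c_{\min}}} - \frac{c_{\min}+xc_{\max}}{b}$ — the leading $2/\sqrt{k_2}$ coming from the fact that \algmu{PROP}'s welfare is $\approx \sqrt{k_2}\cdot(n/\sqrt{k_2})\cdot c_{\min}$ while optimal is $\approx n b$, but the standard $\sqrt{k}$-tightness gadget actually needs the small groups to \emph{also} approve the big project so that a voter's "up to one project" slack is used efficiently, giving the factor $2$ rather than $1$.

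The main obstacle I expect is step (3): carefully engineering the approval sets so that EJR1$_{\mu^c}$ is \emph{not} satisfiable by any cheaper outcome, i.e.\ the small groups' cohesiveness genuinely pins \algmu{PROP} down. In the multiwinner tightness proofs this is delicate because one must prevent the rule from "double-dipping" — satisfying several cohesive groups with overlapping projects or with the single big project. The cleanest way is to make the $\sqrt{k_2}$ small groups' witness projects pairwise disjoint in support and cheap, and to let each small-group voter additionally approve the big project, so that the only way to violate the EJR1 obligation for group $N'$ via the "up to one project" escape clause is to have already allocated a near-full cheap project to that group; a short case analysis on which of the cheap projects lie in $P'$ then forces all but at most one of them in, and the per-group slack of one project's satisfaction ($=c_{\min}$) is exactly what the construction can afford to give up. Once this pinning lemma is established, the arithmetic in steps (1), (2), and (4) is routine bookkeeping with floors.
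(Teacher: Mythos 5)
Your high-level idea\,---\,a Lackner--Skowron-style instance with singleton cohesive groups attached to cheap projects, a concentrated group attached to expensive projects, and a leftover term $x$ arising because the residual budget does not divide evenly by $c_{\max}$\,---\,is the right one, and it is what the paper does. But the quantitative structure of your construction is inverted in a way that breaks the argument. You propose roughly $\sqrt{k_2}$ disjoint small groups, each cohesive for one cheap project, with the ``big'' group being almost the entire electorate. Then EJR1$_{\mu^c}$ forces total spending of only about $\sqrt{k_2}\cdot c_{\min}=b/\sqrt{k_2}$ on cheap projects, leaving almost the whole budget free; \algmu{PROP} can spend the rest on the expensive projects approved by (almost) everyone and achieve a utilitarian ratio close to $1$, so the instance proves nothing. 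Your own arithmetic betrays the problem: $\sqrt{k_2}\cdot(n/\sqrt{k_2})\cdot c_{\min}=nc_{\min}=nb/k_2$, which against an optimum of $nb$ would give a ratio of $1/k_2$\,---\,impossible, since that would contradict the lower bound of \Cref{thm:combined_guarantee}, which \alg{MES} (an EJR1 rule) satisfies. Also, a group of size $n/\sqrt{k_2}$ has budget share $b/\sqrt{k_2}=\sqrt{k_2}\,c_{\min}$ and is therefore cohesive for $\sqrt{k_2}$ cheap projects, not ``barely'' one. The paper's instance has the proportions the other way around: $n=k_2$ voters, of whom only $\floorsqrt{n}$ approve the $k_1$ expensive projects (so $uw(P^*)=\floorsqrt{n}\,b$, not $nb$), while each of the remaining $n-\floorsqrt{n}$ voters approves a single private cheap project of cost $c_{\min}=b/n$. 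These roughly $k_2$ singleton cohesive groups force \algmu{PROP} to spend $b-\floorsqrt{n}\,c_{\min}$, i.e., nearly the entire budget, on low-welfare projects; the leftover $\floorsqrt{n}\,c_{\min}$ buys $\lfloor\floorsqrt{n}\frac{k_1}{n}\rfloor$ expensive projects, and the two contributions each yield about $1/\floorsqrt{k_2}$ of the optimum. That is where the factor $2$ comes from\,---\,not from the double-approval device you suggest.

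Your proposed fix of having the small-group voters also approve the big project is actively harmful. Under $\mu^c$, as soon as $P'$ contains a single expensive project, every such voter $i$ has $\mu_i(P')\ge c_{\max}\ge c_{\min}=\mu_i(T)$, so the escape clause of \Cref{def:EJR1} is satisfied and the obligation to include their cheap project vanishes; moreover, those voters would then derive welfare from the \algmu{MaxSat} outcome, ruining the ratio computation. The pinning argument needs exactly the opposite: each singleton voter approves \emph{only} their own cheap project, so the only admissible witness $p$ in \Cref{def:EJR1} is that project itself, the strict inequality $0+c_{\min}>c_{\min}$ fails, and the project is forced into $P'$.
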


When $c_{\min}=c_{\max}$, we get $x=0$ and the utilitarian ratio above reduces to the multiwinner bound. %

If we assume that $\sqrt{k_2}\in \mathbb{N}$, we get an upper bound of
$2\sqrt{\frac{c_{\min}}{b}}-\frac{c_{\min}+xc_{\max}}{b}.$ We note that $x< 1$ and that $x$ approaches $1$ for appropriate choices of $k_1$ and $k_2$. For instance, consider the sequences $k_1(a)=a-1$ and $k_2(a)=a^2$ for $a\in \mathbb{N}$, resulting in $x=\frac{a-1}{a} \rightarrow 1$.
For $x\rightarrow 1$, the utilitarian ratio in the proof above approaches the \alg{MES} utilitarian guarantee of $2\sqrt{\frac{c_{\min}}{b}}-\frac{c_{\min}+c_{\max}}{b}$ from \Cref{thm:combined_guarantee}. %

\section{Guarantees with Incorrect Satisfaction}\label{sec:greedy_greedy}

Finally, we consider the impact on welfare from using an ``incorrect'' satisfaction function for the \alg{Greedy} rule.

\begin{restatable}{theorem}{greedygreedy}\label{thm:greedy_greedy}
    Let $\mu$ be the actual DNS satisfaction function, and let $\mu'$ be another DNS satisfaction function. Then, the utilitarian guarantee  
    (as measured in terms of $\mu$)
    of \algmuprime{Greedy} is
    $\frac{b-c_{\max}}{b}\times\frac{c_{\min}}{c_{\max}}$. This bound is asymptotically tight. 
    
\end{restatable}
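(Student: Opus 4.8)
The plan is to mirror the structure used for the "correct-satisfaction" greedy guarantee in \propref{prop:greedyguarantee}, but to insert an explicit loss factor accounting for the mismatch between the rule's satisfaction function $\mu'$ and the welfare measure $\mu$. The key observation is that \algmuprime{Greedy} picks projects greedily by the value $v_{\mu'}(p) = |N_p|\mu'(p)/c(p)$, whereas welfare is scored by $v_\mu(p) = |N_p|\mu(p)/c(p)$. Since both $\mu$ and $\mu'$ are DNS, \lemref{lemma:dns} applies to each, and for \emph{any} project $p$ the two values are related by a bounded multiplicative factor: $\mu(p)/\mu'(p)$ lies between $\mu_{\min}/\mu'_{\max}$ and $\mu_{\max}/\mu'_{\min}$, and more usefully, for two projects $p,q$ one gets $v_\mu(p)/v_\mu(q) \ge (v_{\mu'}(p)/v_{\mu'}(q)) \cdot (c_{\min}/c_{\max})$ or a similar bound by combining the DNS monotonicity conditions \ref{dns1} and \ref{dns2}. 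This $c_{\min}/c_{\max}$ factor is exactly the extra term appearing in the statement.

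The steps I would carry out: (1) Let $P' = \algmuprime{Greedy}(I)$ and $P^* = \algmu{MaxSat}(I)$. As in the knapsack-style argument behind \propref{prop:greedyguarantee}, split $P^*$ into the part $P^*\cap P'$ already taken and the remainder; the remainder projects were all passed over by \algmuprime{Greedy} because of budget exhaustion, so \algmuprime{Greedy} spent at least $b - c_{\max}$ of the budget on projects each having $v_{\mu'}$ at least as large as every remaining project in $P^*$. (2) Translate the $v_{\mu'}$-domination into a $v_\mu$-domination losing a factor of $\frac{c_{\min}}{c_{\max}}$: if $v_{\mu'}(p) \ge v_{\mu'}(q)$ then, using that $\mu,\mu'$ are DNS (so $\mu(p)/c(p) \ge \mu_{\max}/c_{\max}$ and $\mu(q)/c(q) \le \mu_{\min}/c_{\min}$ etc., via \lemref{lemma:dns}), deduce $v_\mu(p) \ge v_\mu(q)\cdot\frac{c_{\min}}{c_{\max}}$ up to the $\mu$-normalization, which cancels. (3) Conclude $uw_\mu(P') \ge \frac{c_{\min}}{c_{\max}}\cdot\frac{b-c_{\max}}{b}\cdot uw_\mu(P^*)$ by the usual accounting: the budget spent by \algmuprime{Greedy} covers a $\frac{b-c_{\max}}{b}$ fraction, and every unit of that budget buys $\mu$-welfare at a rate at least $\frac{c_{\min}}{c_{\max}}$ times the best rate available on the $P^*$ remainder. (4) For asymptotic tightness, construct an instance with two project types: a cheap type (cost $c_{\min}$) that $\mu'$ ranks highly but $\mu$ scores poorly (because $\mu$ is closer to $\mu^c$, say), and an expensive type (cost $c_{\max}$) that $\mu$ favors; tune supporter counts so \algmuprime{Greedy} fills the budget with the cheap type while \algmu{MaxSat} takes the expensive type, and add a single blocking expensive project to force the extra $\frac{b-c_{\max}}{b}$ loss, exactly as in the tightness construction for \propref{prop:greedyguarantee}.

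The main obstacle I anticipate is step~(2): carefully extracting the clean factor $\frac{c_{\min}}{c_{\max}}$ from the interaction of two different DNS functions without accidentally introducing dependence on $\mu_{\min}/\mu_{\max}$ or $\mu'_{\min}/\mu'_{\max}$. The trick is that the welfare ratio $uw_\mu(P')/uw_\mu(P^*)$ is scale-invariant in $\mu$, so one may normalize, say, $\mu_{\max}/c_{\max} = 1$; then \lemref{lemma:dns} pins every $\mu$-value-per-cost into $[1, \mu_{\min}/c_{\min}]$, and the corresponding normalization freedom in $\mu'$ must be used to make the greedy-domination inequality line up. Keeping these two normalizations consistent while chasing the inequalities through the knapsack accounting is the delicate part; everything else is a routine adaptation of the arguments already used for \propref{prop:greedyguarantee} and \thmref{thm:combined_guarantee}.
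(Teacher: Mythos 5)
Your plan follows essentially the same route as the paper's proof: isolate a sub-bundle of the $\mu'$-greedy outcome of cost more than $b-c_{\max}$ whose members $v_{\mu'}$-dominate every unpicked project of the optimum, convert that domination into a $v_\mu$-domination at a multiplicative loss of $\frac{c_{\min}}{c_{\max}}$, and finish with the knapsack accounting of \propref{prop:greedyguarantee}; your tightness instance (cheap, thinly-but-widely supported projects favoured by $\mu'=\mu^\#$ versus expensive, universally supported ones favoured by $\mu=\mu^c$) is also the paper's construction. The one place your sketch would go wrong is the justification of step~(2), which you correctly flag as the delicate point: bounding $\mu(p)/\mu(q)$ and $\mu'(q)/\mu'(p)$ separately via the instance-wide extremes of \lemref{lemma:dns} (or via a normalization such as $\mu_{\max}/c_{\max}=1$) leaves a spurious factor of $\mu_{\min}/\mu_{\max}$, which can be much smaller than $1$. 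The clean argument is per pair: write $\frac{v_\mu(p)}{v_\mu(q)}\ge \frac{v_{\mu'}(p)}{v_{\mu'}(q)}\cdot\frac{\mu(p)}{\mu(q)}\cdot\frac{\mu'(q)}{\mu'(p)}$ and split on which of $p,q$ is cheaper; in either case DNS condition~\ref{dns1} makes one of the two fractions at least $1$ while DNS condition~\ref{dns2} makes the other at least the ratio of the cheaper cost to the more expensive one, hence at least $\frac{c_{\min}}{c_{\max}}$, with no dependence on $\mu_{\min}/\mu_{\max}$ or $\mu'_{\min}/\mu'_{\max}$ surviving.
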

This guarantee is made up of two terms. The first, $\frac{b-c_{\max}}{b}$, captures the fact that \alg{Greedy} is only guaranteed to pick the most efficient set of projects for the first $b-c_{\max}$ dollars it spends. The second, $\frac{c_{\min}}{c_{\max}}$, can be thought of as the distortion of the instance, dictating how far DNS satisfaction functions can diverge. We can show that this bound is tight by constructing an example in which the actual satisfaction function is $\mu^c$ and our \alg{Greedy} rule is parametrised with $\mu^\#$.

\section{Conclusion and Future Work}
We studied the utilitarian guarantee of the Method of Equal Shares. In particular, we obtained a utilitarian guarantee of $2 \sqrt{\frac{c_{\min}}{b}} - \frac{c_{\max} + c_{\min}}{b}$ (with $c_{\min}$ and $c_{\max}$ being the minimum and maximum cost of a project and $b$ the budget limit in a given instance) when the satisfaction of voters is measured by a DNS satisfaction function. We further showed that this bound is tight for rules satisfying EJR1, and thus for MES.

There are multiple ways one could move forward from here. Firstly, in a very recent work, \citet{GPS+24a} introduced "MES with bounded overspending" in an attempt to fix some shortcomings of MES. Does this rule behave better than MES with regard to its utilitarian guarantee or are there perhaps other rules which allow to trade-off between proportionality and welfare? 

Secondly, our utilitarian guarantees %
are worst-case guarantees, and thus are unlikely to be representative of real-world scenarios. For instance, the experiments of \citet{BFNK19a}, \citet{EFI+24a}, and \citet{RML25a} indicate that proportional voting rules behave significantly better than what the worst-case guarantee would suggest. 
It would be interesting to understand to what extent empirical performance depends on the parameters of the instance, such as those used in our utilitarian guarantee.

\newpage 
\section*{Acknowledgments}

This work was supported by a \textit{Structural Democracy Fellowship} through the Brooks School of Public Policy at Cornell University, an \emph{EPSRC Doctoral Training Partnership} award, and by the Singapore Ministry of Education under grant number MOE-T2EP20221-0001.

\newpage
\appendix

\section*{Appendix}

\section{Omitted proofs}

\subsection*{Proofs from \Cref{sec:negative_results}}\label{app:negative_results}

\worstunb*
\begin{proof}
    Consider a PB instance with $n\geq 3$ voters and project set $P$ containing two projects:

    \begin{itemize}
    \item $p_1$, with $N_{p_1}=\{1\}$ and $c(p_1)=b$, and
    \item $p_2$, with $N_{p_2}=N\setminus \{1\}$, and $c(p_2) \le \frac{b}{n-1}$, chosen in such a way that $\mu(p_2) < \frac{1}{n-1} \mu(p_1)$. 
    \end{itemize}
    This instance has two exhaustive outcomes, $\{p_1\}$ and $\{p_2\}$. 

    As $uw(p_1) = \mu(p_1) > (n-1)\cdot\mu(p_2) = uw(p_2)$, \algmu{MaxSat} chooses the outcome $\{p_1\}$.

    Using DNS condition \ref{dns2} we get that 
    \begin{align*}
        v(p_2) = \frac{(n-1) \mu(p_2)}{c(p_2)} \ge \frac{(n-1) \mu(p_1)}{c(p_1)} > v(p_1)
    \end{align*} and hence, \algmu{Greedy} chooses project $p_2$. Similarly, as $c(p_2) \le \frac{b}{n-1}$ the project $p_2$ can be afforded by its supporters and hence  $N\setminus\{1\}$ is a $\{p_2\}$-cohesive group. As none of the voters in this group approve any other projects, we can conclude that \algmu{PROP} must select $\{p_2\}$.
    
    Thus, the utilitarian guarantee of \algmu{Greedy} and \algmu{PROP} is at most \[\frac{(n-1) \mu(p_2)}{\mu(p_1)}<\frac{\varepsilon\mu(p_1)}{\mu(p_1)}=\varepsilon. \qedhere\]
\end{proof}
\wnondns*
\begin{proof}

Consider a PB instance $I$ with $n\geq \frac{b}{c_{\max}-c_{\min}}$ voters and project set $P$ containing:

\begin{itemize}
    \item $p_j$ for $1\leq j\leq \lfloor\frac{b(n-1)}{n c_{\min}}\rfloor=k$, with $N_{p_j}=N\setminus \{1\}$ and $c(p_j)=c_{\min}$, and
    \item $p'$ with $|N_{p'}|=1$ and $c(p_j)=c_{\max}$.
\end{itemize}

We choose a cost-neutral satisfaction function $\mu$ with $\mu(p_j)=\varepsilon$ for $1\leq j\leq k$ and $\mu(p')=n^2$. This satisfaction function violates DNS condition \ref{dns2}. Let \algmu{PROP} be any rule that satisfies EJR1 for our chosen $\mu$.

\algmu{MaxSat} selects $P^*\supseteq \{p'\}$ with $uw(P^*)\geq n^2$.

Voter set $N\setminus \{1\}$ is a  $\{p_1,\dots,p_k\}$-cohesive group, and as these voters approve no other projects, the outcome of \alg{MES} and \alg{PROP} must contain $P_M=\{p_1,\dots,p_k\}$. Furthermore, using our $n\geq \frac{b}{c_{\max}-c_{\min}}$ assumption, we can find that $$c(P_M)=\lfloor\frac{b(n-1)}{n c_{\min}}\rfloor c_{\min}>\frac{b(n-1)}{n}-c_{\min}\geq b-c_{\max},$$ 
which means $P_M$ is an exhaustive outcome. $P_M$ is therefore the outcome of \alg{PROP}, with $uw(P_M)\leq n^2\varepsilon$. We can conclude that the utilitarian ratio of \algmu{PROP} for instance $I$ is at most $\varepsilon$, which we can choose to be arbitrarily small.
\end{proof}

\subsection*{Proofs from \Cref{sec:mes_greedy}}\label{app:mes_greedy}

\sameprojects*

\begin{proof}

    Whenever \alg{MES} chooses a project $p$ for which no voters are budget-limited, $p$'s cost is split equally among its set of supporters $N_p$. Thus we can find that it was chosen with the following $\rho$: $\rho(p)=\frac{c(p)}{|N_p|\mu(p)}=\frac{c(p)}{uw(p)}=\frac{1}{v(p)}$. Meanwhile, if a voter is budget-limited for $p$ then it cannot be purchased with equal payments and thus $\rho(p)>\frac{1}{v(p)}$. 
    
    Suppose that for each $j\in \{1,\dots,k\}$ no voter is budget-limited for $p_j$ at stage $j$ of \algmu{MES}. We show that if $p'_j=p_j$ for $j< m$, then $p'_m=p_m$ (this is the induction step). Using this assumption, we know that both \alg{Greedy} and \alg{MES} have selected $\{p_1,\dots, p_{m-1}\}$ before step $m$. Recall that \alg{Greedy} iteratively selects the affordable unchosen project with the highest value, and $\alg{MES}$ selects the project with the lowest $\rho$. Thus, we know that $v(p_m)\geq v(p)$ for any other unselected project $p\in P\setminus \{p_1,\dots, p_m\}$, and, since no voter is budget limited for $p_m$ in stage $m$ of \alg{MES}'s execution, we can conclude that $\rho(p_m)=\frac{1}{v(p_m)}\leq \rho(p)$ for any $p\in P\setminus \{p_1,\dots, p_m\}$, and therefore $p'_m=p_m$.

    We observe that the above argument also holds for $m=1$, and thus we can conclude that $p'_k=p_k$ by induction. 
\end{proof}

\minvalue*

\begin{proof}
    We know that $c(p)\leq |N_p|\frac{b}{n}$ as the cost of $p$ cannot exceed the total initial budgets of its supporters. Using this fact, $v(p)=|N_p|\frac{\mu(p)}{c(p)}\geq \frac{n}{b}\mu(p)\geq \frac{n}{b} \mu_{\min}$.
\end{proof}

\dns*

\begin{proof}    
    This follows directly from the definition of DNS functions.
    Using DNS condition \ref{dns1}, we know that $\mu_{\min}$ is the satisfaction of the cheapest project $p^c=\argmin_{p\in P}\{c(p)\}$ and $\mu_{\max}$ is the satisfaction of the most expensive project in $p^e=\argmin_{p\in P}\{c(p)\}$. Then, using DNS condition \ref{dns2}, we know that $\frac{\mu_{\min}}{c_{\min}}=\frac{\mu(p^c)}{c(p^c)}\geq \frac{\mu(p)}{c(p)} \geq \frac{\mu(p^e)}{c(p^e)}=\frac{\mu_{\max}}{c_{\max}}$.
\end{proof}

\begin{proof}[Proof of the Case Distinction for \Cref{thm:mesovergreedy}]
Recall the definition of $\alpha$,
\begin{equation}
\alpha = |N_{p_{i+1}}|\frac{b}{n c(p_{i+1})},    \label{thresholdalpha}
\end{equation} 
as well as
\begin{equation}
    v^*=\frac{|N_{p_{i+1}}|{\mu^*}}{c(p_{i+1})}=\alpha {\mu^*}\frac{n}{b}. \label{thresholdvalue}
\end{equation}

In each of the three cases we distinguished, we show that the bound $\frac{uw(P_M)}{uw(P_G)}\geq 2\sqrt{\frac{c_{\min}}{b}}-\frac{c_{\min} + c_{\max}}{b}$ holds. 

    \paragraph{Case 1:}  
    $\alpha<\frac{\mu_{\min}}{{\mu^*}} \le 1$. 
    
    In this case, for every affordable project $p\in P\setminus \{p_1,\dots,p_i\}$ it holds that $v(p)\leq v^*=\alpha {\mu^*}\frac{n}{b}< \frac{n}{b}\mu_{\min}$, using \eqref{thresholdvalue}. Thus by \Cref{lemma:minvalue}, every project in $P_M$ after $p_i$ could not have been picked by \alg{MES}, and must therefore have been picked by the greedy completion. As a consequence, we get that $P_M = P_G$ and thus also $\frac{uw(P_M)}{uw(P_G)} = 1$.
    
    \paragraph{Case 2:} 
    $\alpha\geq 1$.
    
    Consider a voter who was budget-limited for $p_{i+1}$ after step $i$ of \algmu{MES}. This voter must have had remaining funds less than $\frac{c(p_{i+1})}{|N_{p_{i+1}}|}=\frac{b}{\alpha n}$ (using \eqref{thresholdalpha}) and has thus spent at least $\frac{(\alpha-1)b}{\alpha n}$ on projects from among $\{p_1,\dots,p_i\}$. %
    We also know that each $p\in \{p_1,\dots,p_i\}$ has $v(p)\geq v^*$, and was funded equally by their supporters. Using these facts and the definition of $\alpha$ in \eqref{thresholdalpha} we find a lower bound for their total cost $\val$.
        \begin{align*}
        \val & \geq \frac{(\alpha-1)b}{\alpha n}\min_{p\in \{p_1,\dots,p_i\}}\{|N_p|\} \\
         & = \frac{(\alpha-1)b}{\alpha n}\min_{p\in \{p_1,\dots,p_i\}}\{v(p)\frac{c(p)}{\mu(p)}\} \\
         & \geq \frac{(\alpha-1)b}{\alpha n}v^*\min_{p\in \{p_1,\dots,p_i\}}\{\frac{c(p)}{\mu(p)}\} \\
         & = \frac{(\alpha-1)b}{\alpha n}|N_{p_{i+1}}|\frac{{\mu^*}}{c(p_{i+1})}\min_{p\in \{p_1,\dots,p_i\}}\{\frac{c(p)}{\mu(p)}\} \\
         & \geq (\alpha-1){\mu^*}\min_{p\in \{p_1,\dots,p_i\}}\{\frac{c(p)}{\mu(p)}\}
        \end{align*}
        
        From \Cref{lemma:dns} we know that $\min_{p\in \{p_1,\dots,p_i\}}\{\frac{c(p)}{\mu(p)}\}\geq \frac{c_{\min}}{\mu_{\min}}$ resulting in
    \begin{equation}
        \sum_{j=1}^i c(p_j)=\val\geq (\alpha-1)c_{\min}\frac{ {\mu^*}}{\mu_{\min}}. \label{totalcost}
    \end{equation}
    
    Let $\Ratio=\frac{uw(p'_1,\dots,p'_m)}{uw(p_1,\dots,p_g)}$ be the welfare ratio of \GM{} compared to \alg{Greedy}.
    We begin by using the inequalities in \eqref{projects1}, \eqref{projects2} and \eqref{projects3}, and simplifying:
    \begin{align*}
    \Ratio 
     & \geq \frac{\val v^*+(b-\val-c_{\max})\frac{n}{b}\mu_{\min}}{\val v^*+ (b-\val) v^*} \\
     & = \frac{\val (v^*-\frac{n}{b}\mu_{\min})+(b-c_{\max})\frac{n}{b}\mu_{\min}}{b v^*} \\
    \intertext{Using $v^*=\frac{\alpha {\mu^*}n}{b}$  \eqref{thresholdvalue} and $\val\geq \frac{(\alpha-1)c_{\min}{\mu^*}}{\mu_{\min}}$ \eqref{totalcost} :}
     \Ratio  
     &  \geq \frac{\frac{(\alpha-1)c_{\min} {\mu^*}}{\mu_{\min}} (\alpha {\mu^*}\frac{n}{b}-\frac{n}{b}\mu_{\min})+(b-c_{\max})\frac{n}{b}\mu_{\min}}{b \alpha {\mu^*}\frac{n}{b}} \\
    \intertext{Letting $M=\frac{\mu_{\min}}{{\mu^*}}$ (notably $ \frac{\mu_{\min}}{\mu_{\max}}\leq M\leq 1$), simplifying and rearranging to split into two terms:}
     \Ratio  
     &  \geq \frac{\alpha^2+\frac{b}{c_{\min}}M^2}{\alpha \frac{b}{c_{\min}} M}+\frac{\alpha(-1-M)+M-\frac{c_{\max}}{c_{\min}}M^2}{\alpha \frac{b}{c_{\min}} M}
    \end{align*}

    The first term can be simplified using the AM-GM inequality:
    
    $$\frac{\alpha^2+\frac{b}{c_{\min}}M^2}{\alpha \frac{b}{c_{\min}} M}\geq \frac{2\alpha \sqrt{\frac{b}{c_{\min}}}M}{\alpha \frac{b}{c_{\min}} M}=2\sqrt{\frac{c_{\min}}{b}}$$
    
    We can use \Cref{lemma:dns} to show that $c_{\min}\leq c_{\max}M$, and alongside with our assumption that $\alpha\geq 1$ we can find that 
        \begin{align*}
            & \frac{\alpha(-1-M)+M-\frac{c_{\max}}{c_{\min}}M^2}{\alpha \frac{b}{c_{\min}} M} \\
            &\geq  \frac{\alpha(-c_{\min}-c_{\min}M)+\alpha M[c_{\min}-c_{\max}M]}{\alpha b M} \\
            &=  \frac{-c_{\min}-c_{\max}M^2}{bM}\\
            &=  -\frac{1}{b}(\frac{c_{\min}}{M}+Mc_{\max})\\
            &=  -\frac{1}{b}(c_{\min}+c_{\max}+\frac{1-M}{M}c_{\min}-(1-M)c_{\max})\\
            &\geq -\frac{1}{b}(c_{\min}+c_{\max}+(1-M)c_{\max}-(1-M)c_{\max})\\
            &=  -\frac{c_{\min}+c_{\max}}{b}  
        \end{align*}
        The second last step above uses $c_{\min}\leq c_{\max}M$.

    Combining these results, we obtain
    $$\Ratio \geq 2\sqrt{\frac{c_{\min}}{b}}-\frac{c_{\min}+c_{\max}}{b}.$$
    
    \paragraph{Case 3:} 
    $\frac{\mu_{\min}}{{\mu^*}}\leq \alpha< 1$. 
    
    Using the inequalities in \eqref{projects1}, \eqref{projects2} and \eqref{projects3} and our characterization of $v^*$ in \eqref{thresholdvalue}, and simplifying:
    \begin{align*}
    \Ratio & =\frac{\val (v^*-\frac{n}{b}\mu_{\min})+(b-c_{\max})\frac{n}{b}\mu_{\min}}{b v^*} \\
     & =\frac{\val (\alpha {\mu^*}-\mu_{\min})+(b-c_{\max})\mu_{\min}}{\alpha b {\mu^*}} \\
     & \geq \frac{(b-c_{\max})\mu_{\min}}{b \mu_{\max}} \\
     & \geq \frac{\mu_{\min}}{\mu_{\max}}-\frac{c_{\max}}{b}\frac{\mu_{\min}}{\mu_{\max}}
    \end{align*}
    
    From DNS condition \ref{dns2} we know that $\frac{\mu_{\min}}{\mu_{\max}}\geq \frac{c_{\min}}{c_{\max}}$. Thus $\frac{\mu_{\min}}{\mu_{\max}}(\frac{b-c_{\max}}{b})\geq \frac{c_{\min}}{c_{\max}}-\frac{c_{\min}}{b}=(\frac{c_{\min}}{c_{\max}}+\frac{c_{\max}}{b})-\frac{c_{\min}+c_{\max}}{b}\geq 2\sqrt{\frac{c_{\min}}{b}}-\frac{c_{\min}+c_{\max}}{b}$ by the AM-GM inequality.
    
\end{proof}

\subsection*{Proofs from \Cref{sec:mes_guarantee}}\label{app:mes_guarantee}

\greedyguarantee*

Maximizing utilitarian welfare in the PB setting (for arbitrary additive utilities) is equivalent to solving the well-known Knapsack problem (see e.g. \citet{KPP04a} for an introduction). An instance of Knapsack contains a set of items each with a profit and weight, as well as a total knapsack capacity. A PB instance can be reduced to a knapsack instance by setting item profit to a project's utilitarian welfare, item weight to project cost, and total Knapsack capacity to the total budget. The \alg{Greedy} rule for PB is equivalent to the simplest Greedy algorithm for knapsack.

\begin{proof}

Let $P_G$ and $P^*$ be the outcomes of \algmu{Greedy} and \algmu{MaxSat} respectively, let $p^*$ is the first project that \algmu{Greedy} cannot afford and let $P'_G\subseteq P_G$ be the set of projects \algmu{Greedy} picked before considering $p^*$. 
The utilitarian guarantee follows from Knapsack literature (e.g. Corollary 2.2.2 in \citet{KPP04a}). 
We can find that $uw(P^*)-uw(P'_G)\leq [b-c(P'_G)]v(p^*)$ and $b-c(P'_G)\leq c(p^*)\leq c_{\max}$, and for each $p\in P'_G, v(p)\geq v(p^*)$ which yields: %

\begin{align*}
\frac{uw(P_G)}{uw(P^*)} & \geq \frac{uw(P'_G)}{uw(P^*)} \\
 & \geq \frac{uw(P'_G)}{uw(P'_G)+v(p^*)(b-c(P'_G))} \\
 & =1-\frac{v(p^*)(b-c(P'_G))}{uw(P'_G)+v(p^*)(b-c(P'_G))} \\
 & \geq 1-\frac{v(p^*)c_{\max}}{v(p^*)c(P'_G)+v(p^*)(b-c(P'_G))} \\
 & =\frac{b-c_{\max}}{b}
\end{align*}

We can show that this guarantee are tight using the following example. Let $\mu$ be the cost satisfaction function and let $\varepsilon>0$ and $x\in \naturals^+$ arbitrary. Consider a PB instance with $n\geq 2$ and project set $P$ containing:
\begin{itemize}
    \item $p_j$ for $1\leq j\leq x-1$, with $c(p_j)=(1+\varepsilon)\frac{b}{x}$, and $N_{p_j}=N$.
    \item $p'_j$ for $1\leq j\leq x$, with $c(p'_j)=\frac{b}{x}$, and $N_{p'_j}=N\setminus \{1\}$
\end{itemize}

Observe that $c_{\max}=(1+\varepsilon)\frac{b}{x}$, $c_{\min}=\frac{b}{x}$, and thus project costs are arbitrarily close together, without being exactly equal.
\algmu{Greedy} selects $P_G=\{p_1,\dots, p_{x-1}\}$, with $uw(P_G)=\frac{x-1}{x}bn(1+\varepsilon)$. 
\algmu{MaxSat} selects $P^*=\{p'_1,\dots, p'_x\}$, with $uw(P^*)=b(n-1)$. 

$$\frac{uw(P_G)}{uw(P^*)}=\frac{n}{n-1}\frac{x-1}{x}(1+\varepsilon)=\frac{n}{n-1}(1+\varepsilon-\frac{c_{\max}}{b})$$

Thus, in the $n\rightarrow \infty$ and $\varepsilon\rightarrow 0$ limit, $\frac{uw(P_G)}{uw(P^*)}$ converges to the (minimum) utilitarian guarantee in \Cref{prop:greedyguarantee}.
\end{proof}

\combinedguarantee*

We prove \Cref{thm:combined_guarantee} by comparing the output of \alg{MES} to a truncated output of \alg{Greedy}. To reason about such a truncated output, we define the following concept.

\begin{definition}
    Let $p\in P$ be any project in some instance $I$. We let $\overline{p}$ be the \emph{fractional part} of $p$ on which $c^*<c(p)$ dollars was spent. We define $c(\overline{p})=c^*$, $N_{\overline{p}}=N_p$ and $\mu(\overline{p})=\frac{c(\overline{p})}{c(p)}$, resulting in $v(\overline{p})=v(p)$.
\end{definition}

Note that $\mu(\overline{p})$ is defined in order to cause no change to the value of $p$, and need not follow our DNS (or cost-neutrality) assumption.\footnote{$\overline{p}$ isn't a real project, and thus $\overline{p}\notin \mathcal{P}$.} We use the idea of a fractional project to derive the following comparative guarantee.

\begin{restatable}{lemma}{mesovergreedytruncated}\label{lemma:mesovergreedytruncated}
    Let $\mu$ be a DNS satisfaction function and assume $I$ is a PB instance with $c_{\max}<b$. Consider the projects purchased by \algmu{Greedy} using the first $b-c_{\max}$ dollars it spends, purchasing a fraction of the project that the $(b-c_{\max})$th dollar is spent on. Let this set of projects be $\hat{P_G}=\{p_1,\dots, p_{\hat{g}-1},\overline{p_{\hat{g}}}\}$, with $c(\overline{p_{\hat{g}}})=b-c_{\max}-c(\{p_1,\dots, p_{\hat{g}-1}\})$
    
    Let $P_M$ be the outcome of \GMmu{}, as in \Cref{thm:mesovergreedy}. Then:

$$\frac{uw(P_M)}{uw(\hat{P_G})}
\geq\bigg(2\sqrt{\frac{c_{\min}}{b}}-\frac{c_{\max}+c_{\min}}{b}\bigg)\frac{b}{b-c_{\max}}$$

\end{restatable}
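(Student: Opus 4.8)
The plan is to re-run essentially the same argument as in the proof of \Cref{thm:mesovergreedy}, but with $\hat{P_G}$ (the truncated, fractionalized greedy outcome costing exactly $b-c_{\max}$) in place of $P_G$. I would begin by recalling the threshold stage $i+1$: the first \algmu{MES} stage at which a voter is budget-limited for the affordable project $p_{i+1}$ that \algmu{Greedy} picks at that stage, and set $v^* = v(p_{i+1})$, $\mu^* = \mu(p_{i+1})$, $\val = \sum_{j=1}^i c(p_j)$, exactly as before. (If $\hat{P_G}\subseteq\{p_1,\dots,p_i\}$, i.e.\ the truncation happens before the divergence point, the ratio is at least $1$ and the bound is trivial since $\frac{b}{b-c_{\max}}\ge 1$ and the bracket is $\le 1$ — this edge case should be dispatched first.) The key point is that the upper bound on the welfare of the truncated greedy tail changes: instead of \eqref{projects2}, we now have $uw(\hat{P_G}\setminus\{p_1,\dots,p_i\}) \le (b - c_{\max} - \val)\,v^*$, because the fractional greedy segment beyond stage $i$ has total cost exactly $b - c_{\max} - \val$ and every (fractional) project in it has value $\le v^*$. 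Meanwhile \eqref{projects1} and \eqref{projects3} are unchanged: $uw(\{p_1,\dots,p_i\}) \ge \val v^*$ and $uw(\{p'_{i+1},\dots,p'_m\}) \ge (b - \val - c_{\max})\tfrac{n}{b}\mu_{\min}$, since the \algmu{MES}-side argument (Lemmas~\ref{lemma:sameprojects}, \ref{lemma:minvalue}, \ref{lemma:dns}) does not depend on how we truncate \algmu{Greedy}.

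Next I would redo the case analysis on $\alpha = \tfrac{|N_{p_{i+1}}|b}{n c(p_{i+1})}$, following the three cases in the proof of \Cref{thm:mesovergreedy}. The ratio to bound is now
\[
\Ratiohat = \frac{uw(P_M)}{uw(\hat{P_G})} \ge \frac{\val v^* + (b - \val - c_{\max})\tfrac{n}{b}\mu_{\min}}{\val v^* + (b - c_{\max} - \val)v^*} = \frac{\val v^* + (b - \val - c_{\max})\tfrac{n}{b}\mu_{\min}}{(b - c_{\max})v^*}.
\]
Comparing with the corresponding expression in \Cref{thm:mesovergreedy}, the numerator is identical but the denominator has shrunk from $b v^*$ to $(b - c_{\max})v^*$; hence $\Ratiohat$ is exactly $\tfrac{b}{b-c_{\max}}$ times a quantity bounded below by the same algebra that gave $2\sqrt{\tfrac{c_{\min}}{b}} - \tfrac{c_{\min}+c_{\max}}{b}$ in Cases~2 and~3. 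In Case~1 ($\alpha < \mu_{\min}/\mu^* \le 1$) the argument that \algmu{MES} picks nothing after $p_i$ still goes through, so $P_M = P_G \supseteq \hat P_G$ in the fractional sense and $\Ratiohat \ge \tfrac{b}{b-c_{\max}} \ge (2\sqrt{\tfrac{c_{\min}}{b}} - \tfrac{c_{\min}+c_{\max}}{b})\tfrac{b}{b-c_{\max}}$ trivially. For Cases~2 and~3 I would carry the factor $\tfrac{b}{b-c_{\max}}$ through the same AM-GM and DNS manipulations (using \Cref{lemma:dns} to get $c_{\min} \le c_{\max}M$ with $M = \mu_{\min}/\mu^*$), arriving at $\Ratiohat \ge (2\sqrt{\tfrac{c_{\min}}{b}} - \tfrac{c_{\min}+c_{\max}}{b})\tfrac{b}{b-c_{\max}}$.

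The main obstacle I anticipate is bookkeeping rather than conceptual: making sure the fractional-project definitions interact correctly with the value bounds (that $v(\overline{p_{\hat g}}) = v(p_{\hat g})$ and $\hat P_G$ has cost \emph{exactly} $b - c_{\max}$, not just at most), and verifying that the threshold stage $i+1$ and the chain of inequalities \eqref{projects1}–\eqref{projects3} survive verbatim when \algmu{Greedy}'s tail is replaced by its truncation — in particular that the truncation point $\hat g$ could lie before, at, or after the divergence point $i$, and each sub-case yields the claimed bound. I would also double-check the degenerate situation $b - c_{\max} - \val < 0$ (truncation strictly inside $\{p_1,\dots,p_i\}$), handled by the trivial-ratio observation above. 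Once these edge cases are pinned down, the rest is a mechanical re-tracing of the \Cref{thm:mesovergreedy} computation with a single extra multiplicative factor.
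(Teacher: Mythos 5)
Your proposal follows essentially the same route as the paper's proof: retain the decomposition and the bounds \eqref{projects1} and \eqref{projects3} from \Cref{thm:mesovergreedy}, replace the upper bound \eqref{projects2} on the greedy tail by $(b-\val-c_{\max})v^*$ for the truncated tail, observe that the denominator of the ratio shrinks from $bv^*$ to $(b-c_{\max})v^*$ so the entire case analysis carries through with an extra multiplicative factor of $\frac{b}{b-c_{\max}}$, and dispatch separately the degenerate situation where the truncation point lies before the divergence stage. This matches the paper's argument step for step.

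One justification in your edge cases does not hold up as written. From ``the ratio is at least $1$, the bracket is $\leq 1$, and $\frac{b}{b-c_{\max}}\geq 1$'' you cannot conclude that the ratio exceeds the product: a factor $\leq 1$ times a factor $\geq 1$ need not be $\leq 1$. Likewise, your Case~1 claim that $\Ratiohat\geq\frac{b}{b-c_{\max}}$ is unsupported and generally false in that direction: since \algmu{Greedy} selects projects in decreasing order of value, the truncated prefix has the \emph{highest} value density, so $uw(\hat{P_G})\geq\frac{b-c_{\max}}{b}\,uw(P_G)$ and hence $\frac{uw(P_G)}{uw(\hat{P_G})}\leq\frac{b}{b-c_{\max}}$, the opposite inequality. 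Both gaps are closed by one explicit computation (which is essentially what the paper does): writing $u=\frac{c_{\min}}{b}$ and $w=\frac{c_{\max}}{b}$, the target bound equals $\frac{2\sqrt{u}-u-w}{1-w}\leq 1$ because $2\sqrt{u}-u\leq 1$ for all $u\geq 0$. With that observation, $\Ratiohat\geq 1$ suffices in every degenerate case, and the rest of your argument is sound.
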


\begin{proof}
    In our proof of \Cref{thm:mesovergreedy} we identified three subsets of projects, and derived welfare bounds for them. These are restated below.
    \setcounter{equation}{0}
    \begin{align}
    &uw(\{p_1,\dots,p_i\}) \geq \val v^* \label{projects1_} \\
    &uw(\{p_{i+1},\dots,p_g\})\leq(b-\val) v^* \label{projects2_} \\
    &uw(\{p'_{i+1},\dots,p'_m\})\geq [b-\val-c_{\max}]\frac{n}{b}\mu_{\min} \label{projects3_}
    \end{align}

    If $\val>b-c_{\max}$ then $P_M\supseteq \{p_1,\dots, p_{\hat{g}}\}$, and thus $\frac{uw(P_M)}{uw(\hat{P_G})}\geq 1$. We can use this to find that $2\sqrt{\frac{c_{\min}}{b-{c_{\max}}}}-\frac{c_{\max}+c_{\min}}{b-c_{\max}}\leq 2\sqrt{\frac{c_{\max}}{b-{c_{\max}}}}-\frac{c_{\max}}{b-c_{\max}}\leq 1\leq \frac{uw(P_M)}{uw(\hat{P_G})}$.
    Thus, we henceforth assume $\val\leq b-c_{\max}$. In order to account for truncating $P_G$ to $\hat{P_G}$ we modify \eqref{projects2_} to consider the projects from $\hat{P_G}$ chosen after $p_i$ and obtain the following:
    \setcounter{equation}{1}
    \begin{align}
        uw(\{p_1,\dots, p_{\hat{g}-1},\overline{p_{\hat{g}}}\})\leq(b-\val-c_{\max}) v^* \refstepcounter{equation}\tag{\theequation*} \label{projects2_modified}
    \end{align}
    
    Let's consider how this affects our case distinction, letting $\Ratiohat=\frac{uw(p'_1,\dots,p'_m)}{uw(\{p_1,\dots, p_{\hat{g}-1},\overline{p_{\hat{g}}}\})}$:

    \paragraph{Case 1:} $P_G=P_M$ and thus $\Ratiohat\geq 1$.

    \paragraph{Case 2:} The truncation affects the denominator by replacing $b$ with $b-c_{\max}$:

    \begin{align*}
        \Ratiohat 
        & \geq \frac{\val (v^*-\frac{n}{b}\mu_{\min})+(b-c_{\max})\frac{n}{b}\mu_{\min}}{(b-c_{\max}) v^*} \\
        & \geq \frac{b}{b-c_{\max}}\times \frac{\val (v^*-\frac{n}{b}\mu_{\min})+(b-c_{\max})\frac{n}{b}\mu_{\min}}{b v^*}
    \end{align*}

    The rest of the proof follows from the same steps as the proof of \Cref{thm:mesovergreedy}

    \paragraph{Case 3:}

    \begin{align*}
        & \frac{uw(P_M)}{uw(\{p_1,\dots, p_{\hat{g}-1},\overline{p_{\hat{g}}}\})}\\ 
        =\ & \frac{\val (v^*-\frac{n}{b}\mu_{\min})+(b-c_{\max})\frac{n}{b}\mu_{\min}}{(b-c_{\max}) v^*} \\
         \geq\ & \frac{\mu_{\min}}{\mu_{\max}} 
    \end{align*}

    We know from the proof of \Cref{thm:mesovergreedy} that $\frac{\mu_{\min}}{\mu_{\max}}\frac{b-c_{\max}}{b}\geq  2\sqrt{\frac{c_{\min}}{b}}-\frac{c_{\min}+c_{\max}}{b}$, and thus $\frac{\mu_{\min}}{\mu_{\max}}\geq (2\sqrt{\frac{c_{\min}}{b}}-\frac{c_{\min}+c_{\max}}{b})\frac{b}{b-c_{\max}}$ holds also.
\end{proof}

We use this result to derive a more precise utilitarian guarantee for MES.

\begin{proof}[Proof of \Cref{thm:combined_guarantee}]
    Let $P^*$ be the set of projects selected by \alg{MaxSat}. We know from \Cref{prop:greedyguarantee} that $\frac{uw(P_G)}{uw(P^*)}\geq\frac{b-c_{\max}}{b}$. We can use a similar argument to show that $\frac{uw(\hat{P_G})}{uw(P^*)}\geq\frac{b-c_{\max}}{b}$, by observing that the proof of \Cref{prop:greedyguarantee} works for the truncated output $\hat{P_G}$ also.\footnote{Intuitively we know that \algmu{Greedy} spends the first $b-c_{\max}$ dollars optimally}

    Putting this together with the result from \Cref{lemma:mesovergreedytruncated} we obtain the guarantee above.
\end{proof}

\tightness*

\begin{proof}

    Let $I$ be a PB instance with $n=k_2=\frac{b}{c_{\min}}$, and $P$ consisting of the following projects.
    \begin{itemize}
    \item $p_j$ for $1\leq j\leq k_1$, with $c(p_j)=c_{\max}$ and $N_{p_j}=\{1,\dots,{\floorsqrt{n}\}}\}$.
    \item $p'_j$ for $\floorsqrt{n}+1\leq j\leq n$, with $c(p'_j)=c_{\min}$, with $N_{p'_j}=\{j\}$.
    \end{itemize}

    \alg{MaxSat} selects $P^*=\{p_1,\dots,p_{k_1}\}$ with $c(P^*)=b$ and $uw(P^*)=\floorsqrt{n} b=\floorsqrt{n}nc_{\min}$. 
    
    For \alg{PROP} we notice that for each $\floorsqrt{n}+1\leq j\leq n$, $\{j\}$ is a $\{p'_j\}$-cohesive group, and voter $j$ approves no other projects, so \alg{PROP} must select each $p'_j$. \alg{PROP} also selects $\lfloor\floorsqrt{n}\frac{k_1}{n}\rfloor$ projects from among $\{p_1,\dots,p_{k_1}\}$. Call this set of projects $P_M$. This outcome is exhaustive as 
    \begin{align*}
        c(P_M)
        &=(n-\floorsqrt{n})\frac{n}{b}+\lfloor\floorsqrt{n}\frac{k_1}{n}\rfloor c_{\max}\\
        &= (n-\floorsqrt{n})\frac{n}{b}+(\floorsqrt{n}\frac{k_1}{n}-x)c_{\max}\\
        &=b-xc_{\max}>b-c_{\max},
    \end{align*}

    which means no more projects can be selected. We can find that $uw(P_M)= (n-\floorsqrt{n})c_{\min}+\floorsqrt{n}(\lfloor\floorsqrt{n}\frac{k_1}{n}\rfloor) c_{\max}$ 
    
    We can now compute the utilitarian ratio of \alg{PROP}.
    \begin{align*}
    \frac{uw(P_M)}{uw(P^*)}
    &\leq \frac{1}{\floorsqrt{n}}-\frac{1}{n}
    +(\floorsqrt{n}\frac{k_1}{n}-x)\frac{c_{\max}}{nc_{\min}} \\
    &\leq \frac{1}{\floorsqrt{n}}-\frac{1}{n}
    +\frac{\floorsqrt{n}\frac{k_1}{n}\frac{b}{k_1}}{nc_{\min}}
    -x\frac{c_{\max}}{nc_{\min}} \\
    &\leq \frac{2}{\floorsqrt{n}}
    -\frac{c_{\min}+xc_{\max}}{nc_{\min}}\\
    &=\frac{2}{\floorsqrt{b/c_{\min}}}-\frac{c_{\min}+xc_{\max}}{b}
    \end{align*}
\end{proof}

\subsection*{Proofs from \Cref{sec:greedy_greedy}}\label{app:greedy_greedy}

\greedygreedy*

\begin{proof}
    Let $\mu$ be a (real) DNS satisfaction function, with $uw=uw_\mu$ and $v=v_\mu$. Let $\mu'$ be another DNS satisfaction function with corresponding value function $v_{\mu'}$. We compare the outcomes of \algmu{MaxSat} and \algmuprime{Greedy}, calling them $P^*$ and $P_G$ respectively. If $c(P)\leq b$, then $P^*=P_G=P$, so we suppose $c(P)> b$. 
    
    We know that $c(P^*)\leq b$ and $c(P_G)> b-c_{\max}$, as \algmuprime{Greedy} is exhaustive and $c(P)>b$. 
    There exists $P'_G\subseteq P_G$ with $c(P'_G)>b-c_{\max}$ and $v_{\mu'}(p_g)\geq v_{\mu'}(p^*)$ for any $p_g\in P'_G$ and any $p^*\in P^*\setminus P_G$, else \algmuprime{Greedy} would have selected some project from $P^*\setminus P_G$.

    Let $p_g\in P'_G$ and $p^*\in P^*\setminus P_G$.
    \begin{align*}
    \frac{v(p_g)}{v(p^*)}
    &=\frac{\frac{|N_{p_g}|}{c(p_g)}\mu(p_g)}{\frac{|N_{p^*}|}{c(p^*)}\mu(p^*)}\\
    &=\frac{v_{\mu'}(p_g)\mu(p_g)\mu'(p^*)}{v_{\mu'}(p^*)\mu(p^*)\mu'(p_g)}\\
    &\geq \frac{\mu(p_g)}{\mu(p^*)}\frac{\mu'(p^*)}{\mu'(p_g)}
    \end{align*}

    We know from the definition of DNS functions that for any $p_1,p_2\in P$ with $c(p_1)\leq c(p_2)$ we have $\frac{\mu(p_2)}{\mu(p_1)}\geq 1$ and $\frac{\mu(p_1)}{\mu(p_2)}\geq \frac{c(p_1)}{c(p_2)}\geq \frac{c_{\min}}{c_{\max}}$. This means that regardless of whether $c(p_g)\leq c(p^*)$, or $c(p_g)>c(p^*)$ we can find that 
    $\frac{v(p_g)}{v(p^*)}
    \geq\frac{\mu(p_g)}{\mu(p^*)}\frac{\mu'(p^*)}{\mu'(p_g)}
    \geq \frac{c_{\min}}{c_{\max}}$.
    
    Let $v^{max}=\max_{p^*\in P^*\setminus P_G} v(p^*)$. Clearly $v(p_g)\geq \frac{c_{\min}}{c_{\max}}v_{\max}$ for any $p_g\in P'_G$. Using this we can bound the utility of the following project subsets.
    \setcounter{equation}{0}
    \begin{align*}
    &uw(P^*\cap P'_G)\geq \frac{c_{\min}}{c_{\max}}v^{max}c(P^*\cap P'_G) %
    \\
    &uw(P'_G\setminus P^*)\geq \frac{c_{\min}}{c_{\max}}v^{max}c(P'_G\setminus P^*) %
    \\
    &uw(P^*\setminus P'_G)\leq v^{max}c(P^*\setminus P'_G)%
    \end{align*}

    Finally, we use these bounds to find the utilitarian guarantee of \algmuprime{Greedy}.    
    \begin{align*}
        \frac{uw(P_G)}{uw(P^*)}
        & \geq\frac{uw(P'_G)}{uw(P^*)}\\
        & \geq \frac{\frac{c_{\min}}{c_{\max}}c(P'_G)}{c(P^*\setminus P'_G)+\frac{c_{\min}}{c_{\max}}c(P'_G\cap P^*)} \\
        & \geq \frac{c_{\min}}{c_{\max}}\frac{c(P'_G)}{c(P^*)} \\
        & \geq \frac{c_{\min}}{c_{\max}}\frac{b-c_{\max}}{b}
    \end{align*}

    To show tightness, we consider the following example, using $\mu=\mu^c$ and $\mu'=\mu^\#$.

    \smallskip
    
    Fix $b,k_1,k_2\in \mathbb{N}^+$, with $k_1<k_2$, and let $n$ be the number of voters in the instance. Let $c_{\max}=\frac{b}{k_1}$ and $c_{\min}=\frac{b-c_{\max}+\varepsilon}{k_2}$, with $\varepsilon>0$ sufficiently small. Note that this is a slight departure from our usual definition of $k_2$.

    Define project set $P$ to include the following projects:
    \begin{itemize}
        \item For $1\leq j\leq k_1$, project $p_j$  with $c(p_j)=c_{\max}$ and $|N_{p_j}|=n$.
        \item For $1\leq j\leq k_2$, project $p_j'$ with $c(p_j')=c_{\min}$ and $|N_{p_j'}|=\lfloor n\frac{c_{\min}}{c_{\max}}\rfloor +1=\lceil n\frac{c_{\min}}{c_{\max}} \rceil$ (we can guarantee $n\frac{c_{\min}}{c_{\max}}$ is not an integer by choosing $\varepsilon\in \mathbb{Q}$). 
    \end{itemize}

    \noindent \alg{MaxSat}$_{\mu^c}$ selects $P^*=\{p_1,\dots,p_{k_1}\}$ with $uw(P^*)=bn$.
    On the other hand, 
    \alg{Greedy}$_{\mu^\#}$ selects $P_G=\{p_1',\dots,p_{k^2}'\}$ %
    with $uw(P_G)=(b-c_{\max}+\varepsilon)(\lceil n\frac{c_{\min}}{c_{\max}}\rceil)$. We can choose an appropriately large $n=n_\varepsilon$ (given our already fixed $\varepsilon$, $k_1$ and $k_2$) to make $\lceil n_\varepsilon\frac{c_{\min}}{c_{\max}}\rceil\leq (n_\varepsilon +\varepsilon)\frac{c_{\min}}{c_{\max}}$.

    Thus, $\frac{uw(P_G)}{uw(P^*)}\leq \frac{b-c_{\max}+\varepsilon}{b}\frac{c_{\min}}{c_{\max}}\frac{n_\varepsilon+\varepsilon}{n_\varepsilon}$, which converges to $\frac{b-c_{\max}}{b}\frac{c_{\min}}{c_{\max}}$ as required.
\end{proof}

\end{document}